\tikzstyle{vecArrow} = [thick, decoration={markings,mark=at position
\tikzstyle{innerWhite} = [semithick, white,line width=1.4pt, shorten >= 4.5pt]
\newtheorem{definition}{Definition}
\newtheorem{proposition}[definition]{Proposition}
\newtheorem{lemma}[definition]{Lemma}
\newtheorem{theorem}[definition]{Theorem}
\newtheorem{corollary}[definition]{Corollary}
\newtheorem{conjecture}[definition]{Conjecture}
\newtheorem{remark}[definition]{Remark}
\newtheorem{example}[definition]{Example}
\newtheorem{question}[definition]{Question}
\def\bcj{\begin{conjecture}}
	\def\ecj{\end{conjecture}}
\def\bcr{\begin{corollary}}
	\def\ecr{\end{corollary}}
\def\bd{\begin{definition}}
	\def\ed{\end{definition}}
\def\bea{\begin{eqnarray}}
\def\eea{\end{eqnarray}}
\def\bem{\begin{enumerate}}
	\def\eem{\end{enumerate}}
\def\bex{\begin{example}}
	\def\eex{\end{example}}
\def\bim{\begin{itemize}}
	\def\eim{\end{itemize}}
\def\bl{\begin{lemma}}
	\def\el{\end{lemma}}
\def\bma{\begin{bmatrix}}
	\def\ema{\end{bmatrix}}
\def\bpf{\begin{proof}}
	\def\epf{\end{proof}}
\def\bpp{\begin{proposition}}
	\def\epp{\end{proposition}}
\def\bqu{\begin{question}}
	\def\equ{\end{question}}
\def\br{\begin{remark}}
	\def\er{\end{remark}}
\def\squareforqed{\hbox{\rlap{$\sqcap$}$\sqcup$}}
\def\qed{\ifmmode\squareforqed\else{\unskip\nobreak\hfil
		\penalty50\hskip1em\null\nobreak\hfil\squareforqed
		\parfillskip=0pt\finalhyphendemerits=0\endgraf}\fi}
\def\endenv{\ifmmode\;\else{\unskip\nobreak\hfil
		\penalty50\hskip1em\null\nobreak\hfil\;
		\parfillskip=0pt\finalhyphendemerits=0\endgraf}\fi}
\newenvironment{proof}{\noindent \textbf{{Proof.~} }}{\qed}
\def\Dbar{\leavevmode\lower.6ex\hbox to 0pt
	{\hskip-.23ex\accent"16\hss}D}
\def\url@leostyle{%
	\@ifundefined{selectfont}{\def\UrlFont{\sf}}{\def\UrlFont{\small\ttfamily}}}
\def\bcj{\begin{conjecture}}
	\def\ecj{\end{conjecture}}
\def\bcr{\begin{corollary}}
	\def\ecr{\end{corollary}}
\def\bd{\begin{definition}}
	\def\ed{\end{definition}}
\def\bea{\begin{eqnarray}}
	\def\eea{\end{eqnarray}}
\def\bem{\begin{enumerate}}
	\def\eem{\end{enumerate}}
\def\bex{\begin{example}}
	\def\eex{\end{example}}
\def\bim{\begin{itemize}}
	\def\eim{\end{itemize}}
\def\bl{\begin{lemma}}
	\def\el{\end{lemma}}
\def\bpf{\begin{proof}}
	\def\epf{\end{proof}}
\def\bpp{\begin{proposition}}
	\def\epp{\end{proposition}}
\def\bqu{\begin{question}}
	\def\equ{\end{question}}
\def\br{\begin{remark}}
	\def\er{\end{remark}}
\def\btb{\begin{tabular}}
	\def\etb{\end{tabular}}
	\newcommand{\nc}{\newcommand}
	\nc{\bbA}{\mathbb{A}} \nc{\bbB}{\mathbb{B}} \nc{\bbC}{\mathbb{C}}
	\nc{\bbD}{\mathbb{D}} \nc{\bbE}{\mathbb{E}} \nc{\bbF}{\mathbb{F}}
	\nc{\bbG}{\mathbb{G}} \nc{\bbH}{\mathbb{H}} \nc{\bbI}{\mathbb{I}}
	\nc{\bbJ}{\mathbb{J}} \nc{\bbK}{\mathbb{K}} \nc{\bbL}{\mathbb{L}}
	\nc{\bbM}{\mathbb{M}} \nc{\bbN}{\mathbb{N}} \nc{\bbO}{\mathbb{O}}
	\nc{\bbP}{\mathbb{P}} \nc{\bbQ}{\mathbb{Q}} \nc{\bbR}{\mathbb{R}}
	\nc{\bbS}{\mathbb{S}} \nc{\bbT}{\mathbb{T}} \nc{\bbU}{\mathbb{U}}
	\nc{\bbV}{\mathbb{V}} \nc{\bbW}{\mathbb{W}} \nc{\bbX}{\mathbb{X}}
	\nc{\bbZ}{\mathbb{Z}}
	\nc{\bA}{{\bf A}} \nc{\bB}{{\bf B}} \nc{\bC}{{\bf C}}
	\nc{\bD}{{\bf D}} \nc{\bE}{{\bf E}} \nc{\bF}{{\bf F}}
	\nc{\bG}{{\bf G}} \nc{\bH}{{\bf H}} \nc{\bI}{{\bf I}}
	\nc{\bJ}{{\bf J}} \nc{\bK}{{\bf K}} \nc{\bL}{{\bf L}}
	\nc{\bM}{{\bf M}} \nc{\bN}{{\bf N}} \nc{\bO}{{\bf O}}
	\nc{\bP}{{\bf P}} \nc{\bQ}{{\bf Q}} \nc{\bR}{{\bf R}}
	\nc{\bS}{{\bf S}} \nc{\bT}{{\bf T}} \nc{\bU}{{\bf U}}
	\nc{\bV}{{\bf V}} \nc{\bW}{{\bf W}} \nc{\bX}{{\bf X}}
	\nc{\ba}{{\bf a}} \nc{\be}{{\bf e}} \nc{\bu}{{\bf u}}
	\nc{\brr}{{\bf r}} \nc{\bx}{{\bf x}} \nc{\bz}{{\bf z}}
    \nc{\bv}{{\bf v}}  \nc{\bs}{{\bf s}}  \nc{\bt}{{\bf t}}
	\nc{\cA}{{\cal A}} \nc{\cB}{{\cal B}} \nc{\cC}{{\cal C}}
	\nc{\cD}{{\cal D}} \nc{\cE}{{\cal E}} \nc{\cF}{{\cal F}}
	\nc{\cG}{{\cal G}} \nc{\cH}{{\cal H}} \nc{\cI}{{\cal I}}
	\nc{\cJ}{{\cal J}} \nc{\cK}{{\cal K}} \nc{\cL}{{\cal L}}
	\nc{\cM}{{\cal M}} \nc{\cN}{{\cal N}} \nc{\cO}{{\cal O}}
	\nc{\cP}{{\cal P}} \nc{\cQ}{{\cal Q}} \nc{\cR}{{\cal R}}
	\nc{\cS}{{\cal S}} \nc{\cT}{{\cal T}} \nc{\cU}{{\cal U}}
	\nc{\cV}{{\cal V}} \nc{\cW}{{\cal W}} \nc{\cX}{{\cal X}}
	\nc{\cZ}{{\cal Z}}
	\nc{\hA}{{\hat{A}}} \nc{\hB}{{\hat{B}}} \nc{\hC}{{\hat{C}}}
	\nc{\hD}{{\hat{D}}} \nc{\hE}{{\hat{E}}} \nc{\hF}{{\hat{F}}}
	\nc{\hG}{{\hat{G}}} \nc{\hH}{{\hat{H}}} \nc{\hI}{{\hat{I}}}
	\nc{\hJ}{{\hat{J}}} \nc{\hK}{{\hat{K}}} \nc{\hL}{{\hat{L}}}
	\nc{\hM}{{\hat{M}}} \nc{\hN}{{\hat{N}}} \nc{\hO}{{\hat{O}}}
	\nc{\hP}{{\hat{P}}} \nc{\hR}{{\hat{R}}} \nc{\hS}{{\hat{S}}}
	\nc{\hT}{{\hat{T}}} \nc{\hU}{{\hat{U}}} \nc{\hV}{{\hat{V}}}
	\nc{\hW}{{\hat{W}}} \nc{\hX}{{\hat{X}}} \nc{\hZ}{{\hat{Z}}}
	\nc{\hn}{{\hat{n}}}
	\def\dim{\mathop{\rm Dim}}
	\def\ghz{\mathop{\rm GHZ}}
     \def\w{\mathop{\rm W}}
	\def\max{\mathop{\rm max}}
	\def\min{\mathop{\rm min}}
	\def\supp{\mathop{\rm supp}}
	\def\tr{\mathop{\rm Tr}}
	\def\w{\mathop{\rm W}}
	\def\wt{\mathop{\rm wt}}
	\newcommand{\bra}[1]{\langle#1|}
	\newcommand{\ket}[1]{|#1\rangle}
	\newcommand{\ketbra}[2]{|#1\rangle\!\langle#2|}
	\newcommand{\braket}[2]{\langle#1|#2\rangle}
	\newcommand{\fl}[2]{\left\lfloor\frac{#1}{#2}\right\rfloor}
\begin{document}
\begin{sloppypar}
		\title{Exploring Quantum Weight Enumerators From the $n$-Qubit  Parallelized SWAP Test}

	\author{Fei Shi, Kaiyi Guo, Xiande Zhang and Qi Zhao
	\thanks{The research of F. Shi, K. Guo and Q. Zhao was supported by the Quantum Science and Technology-National Science and Technology Major Project 2024ZD0301900, National Natural Science Foundation of China (NSFC) via Project No. 12347104, No. 12305030, and No. 12571493, Guangdong Basic and Applied Basic Research Foundation via Project 2023A1515012185, Hong Kong Research Grant Council (RGC) via No. 27300823, N\_HKU718/23, and R6010-23, Guangdong Provincial Quantum Science Strategic Initiative No. GDZX2303007. The research of  X. Zhang was supported by the Innovation Program for Quantum Science and Technology 2021ZD0302902, the NSFC under Grants No. 12171452 and No. 12231014,  and the National Key Research and Development Programs of China 2023YFA1010200 and 2020YFA0713100. (Corresponding authors:  Xiande Zhang; Qi Zhao.) }



   \thanks{Fei Shi is with Institute of Quantum Computing and Software, School of Computer Science and Engineering, Sun Yat-sen University, Guangzhou 510006, China; and with QICI Quantum Information and Computation Initiative, School of Computing and Data Science, The University of Hong Kong, Pokfulam Road, Hong Kong SAR, China. }
   
	 \thanks{Kaiyi Guo and Qi Zhao are with QICI Quantum Information and Computation Initiative, School of Computing and Data Science, The University of Hong Kong, Pokfulam Road, Hong Kong SAR, China (email: zhaoqi@cs.hku.hk).   }
	
	\thanks{Xiande Zhang is with  School of Mathematical Sciences,
		University of Science and Technology of China, Hefei, 230026, China; and with Hefei National Laboratory, University of Science and Technology of China, Hefei 230088, China (email: drzhangx@ustc.edu.cn).
	}
}

\markboth{IEEE TRANSACTIONS ON INFORMATION THEORY}%
{Shell \MakeLowercase{\textit{et al.}}: A Sample Article Using IEEEtran.cls for IEEE Journals}

\maketitle

\begin{abstract}
Quantum weight enumerators are fundamental tools for analyzing quantum error-correcting codes  and multipartite entanglement, offering insights into the existence of quantum error-correcting codes and $k$-uniform states. In this work, we establish a connection between quantum weight enumerators and the $n$-qubit parallelized SWAP test. We demonstrate that each shadow enumerator corresponds to a probability derived from this test, providing a physical interpretation for the shadow enumerators. Leveraging the non-negativity of these probabilities, we present an elegant proof for the shadow inequalities. Additionally, we show that the Shor-Laflamme weight enumerators and the Rains unitary enumerators can be calculated using the $n$-qubit parallelized SWAP test. For applications, we utilize this test to compute the distances of quantum error-correcting codes, determine the $k$-uniformity of pure states, and evaluate multipartite entanglement measures. Our results indicate that quantum weight enumerators can be efficiently estimated on quantum computers, opening a path to calculate and verify the distances of quantum error-correcting codes.
\end{abstract}

\begin{IEEEkeywords}
 quantum weight enumerators,  SWAP test,  quantum error-correcting codes, $k$-uniform states.
\end{IEEEkeywords}

\section{Introduction}

\IEEEPARstart{S}hor and Laflamme introduced quantum weight enumerators as a method to characterize quantum error-correcting codes (QECCs) \cite{shor1997quantum}. Rains further expanded this framework by introducing two additional weight enumerators: the unitary enumerators and the shadow enumerators \cite{rains1998quantum,rains1999quantum}. For a stabilizer code, the Shor-Laflamme weight enumerators describe the weight distribution of Pauli strings in the stabilizer group \cite{shor1997quantum}, while the shadow enumerators describe the weight distribution of Pauli strings in the shadow \cite{rains1999quantum}. The Rains unitary enumerators, on the other hand, represent the sum of the subsystem purities of the QECCs \cite{rains1998quantum}.

Quantum weight enumerators can be transformed into each other through the quantum MacWilliams identities \cite{shor1997quantum,rains1998quantum,rains1999quantum,huber2018bounds}, and provide a powerful tool for investigating the existence of QECCs.
For example, they have been employed to investigate the existence of asymmetric quantum codes \cite{sarvepalli2009asymmetric}, entanglement-assisted codes \cite{lai2017linear}, hybrid codes \cite{grassl2017codes},  quantum amplitude damping codes  \cite{ouyang2020linear},  and quantum low-density parity-check (qLDPC) codes \cite{leverrier2023decoding}. Furthermore, the distance of a QECC is an important parameter, which represents its error-correcting capability and can also be characterized by the quantum weight enumerators \cite{shor1997quantum,rains1998quantum,rains1999quantum}. Recently, quantum weight enumerators have been connected to tensor networks \cite{cao2023quantum,cao2024quantum}.

Quantum weight enumerators are also related to multipartite entanglement \cite{scott2004multipartite}. An $n$-partite pure state  is a $k$-uniform state if all the reduced density matrices of $k$ parties are maximally mixed \cite{scott2004multipartite}. Specifically, a
$\fl{n}{2}$-uniform state is referred to as an absolutely maximally entangled (AME) state \cite{helwig2012absolute}, exhibiting maximal entanglement across every bipartition. The existence of
$k$-uniform states \cite{goyeneche2014genuinely,feng2017multipartite,li2019k,pang2019two,shi2022k,shi2024bounds,feng2023constructions,goyeneche2016multipartite} and AME states \cite{goyeneche2015absolutely,huber2017absolutely,horodecki2022five,rather2022thirty,raissi2020constructions,raissi2022general,shen2021absolutely}  has garnered significant attention.  The non-negativity of quantum weight enumerators provides valuable constraints, leading to non-existence results for certain
$k$-uniform and AME states \cite{rains1999quantum,scott2004multipartite,huber2018bounds,shi2022k,shi2024bounds}.
Despite their importance in the study of QECCs and multipartite entanglement, we do not yet know how to calculate quantum weight enumerators using quantum circuits.  In particular, quantum shadow enumerators have lacked an operational interpretation for over twenty years.

In this work,  we build a connection between quantum weight enumerators and the $n$-qubit  parallelized SWAP test.
The standard SWAP test can be used to calculate the overlap between two states \cite{barenco1997stabilization,buhrman2001quantum,garcia2013swap,cincio2018learning,harrow2013testing, fanizza2020beyond}, while the $n$-qubit  parallelized SWAP test can be employed to evaluate  multipartite entanglement measures~\cite{beckey2021computable,cullen2022calculating,schatzki2022hierarchy}.
The main contributions of this work are summarized as follows.

\begin{enumerate}
    \item In the $n$-qubit parallelized SWAP test of two states, we find that the probability of measuring a bitstring on the $n$ ancilla qubits can be expressed as a function of the overlaps between the reduced states of $\rho$ and $\sigma$.
Conversely, we show that the overlap between the reduced states of $\rho$ and $\sigma$ can be expressed as a function of the probabilities.

\item  We show that each probability corresponds to a shadow enumerator, which provides a computable and operational meaning for the shadow enumerators.
 Since each probability is non-negative, we also provide an alternative proof for the shadow inequalities via parallelized SWAP test circuits.

 \item  We show that both the Shor-Laflamme weight enumerators and the Rains unitary enumerators can be calculated from the probabilities in the $n$-qubit parallelized SWAP test.

 \item  For applications, we employ the
$n$-qubit parallelized SWAP test to calculate the distances of QECCs, determine the
$k$-uniformity of pure states, and evaluate multipartite entanglement measures.
\end{enumerate}

Our paper is organized as follows. In Section~\ref{pre}, we introduce the concepts of QECCs, $k$-uniform states, and quantum weight enumerators. In Section~\ref{sec:swap}, we present key properties of the
$n$-qubit parallelized SWAP test. Section~\ref{sec:connection} establishes the connection between quantum weight enumerators and the
$n$-qubit parallelized SWAP test. In Section~\ref{sec:application}, we discuss several applications of these results. Finally, we conclude the paper in Section~\ref{sec:conclusion}.

\section{Preliminaries}\label{pre}

In this section, we recall the concepts of QECCs, $k$-uniform states, quantum weight enumerators, and some related results.

\subsection{QECCs and $k$-uniform states}
We denote $[n]:=\{1, 2, \ldots, n\}$, and let $\cH_{[n]}:=\bigotimes_{i=1}^n\cH_i$  represent the Hilbert space of the $n$-partite system. Unless otherwise specified,   we always assume that the local dimension $\dim(\cH_i)=d$ for $1\leq i\leq n$. A $k$-subset of $[n]$ is a subset containing exactly $k$ elements.   A pure state $\ket{\psi}$ is a normalized column vector in $\cH_{[n]}$ (note that $\bra{\psi}=(\ket{\psi})^{\dagger}$ is a row vector, where $``\dagger"$ is the conjugate transpose operation), and a mixed state $\rho$  is a positive semidefinite matrix on $\cH_{[n]}$ with $\tr(\rho)=1$.
Let $\{e_j\}_{j\in \bbZ_{d^2}}$ be  an orthogonal  basis of matrices   acting on $\cH_i$ for $1\leq i\leq n$,  satisfying  $\tr(e_j^{\dagger}e_{j'})=\delta_{jj'}d$ and $I_d\in \{e_j\}_{j\in \bbZ_{d^2}}$, where $I_d$ is the identity matrix of order $d$. When $d=2$, the set $\{e_j\}_{j\in \bbZ_{4}}$ is chosen from the Pauli matrices:
\begin{equation}
\begin{aligned}
    I&=\begin{pmatrix}
        1 & 0 \\
        0 & 1
    \end{pmatrix}, \,
    X=\begin{pmatrix}
        0 & 1 \\
        1 & 0
    \end{pmatrix}, \,\\
    Y&=\begin{pmatrix}
        0 & -i \\
        i & 0
    \end{pmatrix}, \,
        Z=\begin{pmatrix}
        1 & 0 \\
        0 & -1
    \end{pmatrix}.
\end{aligned}
\end{equation}
When $d\geq 3$, the set $\{e_j\}_{j\in \bbZ_{d^2}}$ can be chosen from the generalized Pauli matrices \cite{bertlmann2008bloch}.  An error basis $\cE_d:=\{E_{\alpha}\}_{\alpha\in \bbZ_{d^2}^n}$ acting on  $\cH_{[n]}$ consists of
$E_{\alpha}=e_{\alpha_1}\otimes e_{\alpha_2}\otimes\cdots\otimes e_{\alpha_n}$, where each $e_{\alpha_i}$ acts on $\cH_i$ for $1\leq i\leq n$, and $\alpha=(\alpha_1, \alpha_2, \ldots, \alpha_n)\in \bbZ_{d^2}^n$. The support of $E_{\alpha}$ is defined as
 $\supp(E_\alpha):=\{i\mid e_{\alpha_i}\neq I_d, \, \forall \, 1\leq i\leq n\}$, and the weight of $E_{\alpha}$ is the size of $\supp(E_\alpha)$, i.e.,  $\wt(E_{\alpha}):=|\supp(E_\alpha)|$. Let us review QECCs  \cite{knill1997theory,scott2004multipartite}.
\begin{definition}
    Let $\cQ$ be a subspace of $\cH_{[n]}$ with dimension $K$, then $\cQ$ is an \emph{$((n, K, \delta))_d$  QECC} if for all states $\ket{\psi}\in \cQ$, and errors $E_{\alpha}\in \cE_d$ with $\wt(E_{\alpha})<\delta$,
\begin{equation}
 \bra{\psi}E_{\alpha} \ket{\psi}=C(E_{\alpha}),
\end{equation}
where  $C(E_{\alpha})$ is a constant which depends only on $E_{\alpha}$, and $\delta$ is called the \emph{distance} of the code. In particular, $\cQ$ is called \emph{pure} if $C(E_{\alpha})=\frac{\tr(E_{\alpha})}{d^n}$, and  one dimensional $((n, 1, \delta))_d$ QECCs refer only to pure QECCs.
\end{definition}

The distance of a QECC characterizes its error-correcting capability.
An $((n, K, \delta))_d$ QECC can detect all the errors acting on at most $\delta-1$ subsystems, and correct all the errors acting on at most $\fl{\delta-1}{2}$ subsystems.

Let us review a special class of QECCs---stabilizer codes \cite{calderbank1997quantum,gottesman1996class}. The
$n$-qubit Pauli group $\cP_n$ consists of all $n$-fold tensor products of the Pauli matrices, along with multiplicative factors  $\pm 1, \pm i$., i.e., $\cP_n:=\{i^\lambda\cE_2\mid \lambda\in \bbZ_4\}$. The natural group
homomorphism is $f: \cP_n\rightarrow \cE_2$, $i^\lambda E_{\alpha}\rightarrow E_{\alpha} $.
A subgroup  $G$ of  $\cP_n$ is called  a \emph{stabilizer} group if $G$ is an abelian group, and $-I_{2^n}\notin G$.
The \emph{centralizer} of the  stabilizer $G$ is defined as $C(G):=\{s\in \cP_n\mid gs=sg, \, \forall g\in G\}$.
\begin{definition}
    Assume $G$ is a stabilizer of $\cP_n$ with $n-k$ independent generators, then the subspace $\text{Fix}(G):=\{\ket{\psi}\in \cH_{[n]}\mid g\ket{\psi}=\ket{\psi}, \, \forall\, g\in G \}$ is called a \emph{stabilizer code} with parameters  $((n,2^k,\delta))_2$, where
    \begin{equation}
        \delta=\min\{\wt(g)\mid g\in f(C(G))\setminus f(G)\}.
    \end{equation}
In particular, if $\delta=\min\{\wt(g)\mid g\in f(C(G))\setminus \{I_{2^n}\}\}$, then $\text{Fix}(G)$ is pure.
\end{definition}

For example, the five-qubit code \cite{knill2001benchmarking} is a $((5,2,3))_2$ stabilizer code with stabilizer generators: $g_1=XZZXI$, $g_2=IXZZX$, $g_3=XIXZZ$, and $g_4=ZXIXZ$, where the tensor product symbol  ``$\otimes$" is omitted for simplicity.
The centralizer generators are: $g_1$, $g_2$, $g_3$, $g_4$, $XXXXX$, and $ZZZZZ$. One can verify that $3=\min\{\wt(g)\mid g\in f(C(G))\setminus \{I_{2^5}\}\}$, which means that the $((5,2,3))_2$ stabilizer code is a pure code.

Next, we introduce $k$-uniform states \cite{scott2004multipartite}. For a subset $S\subseteq [n]$, we denote $S^c:=[n]\setminus S$. For a state $\rho$ on $\cH_{[n]}$, the reduced stated on $S$ is defined as: $\rho_S:=\tr_{S^c}\rho$, where $\tr_{S^c}$ is the partial trace operation.

\begin{definition}
    A pure state $\ket{\psi}\in \cH_{[n]}$ is called a \emph{$k$-uniform state} if $\rho_{S}=\frac{1}{d^{k}}I_{d^k}$ for all $k$-subsets $S$ of $[n]$, where $\rho=\ketbra{\psi}{\psi}$. In particular, a $\fl{n}{2}$-uniform state is called an \emph{AME} state.
\end{definition}

For example,  the $n$-qudit Greenberger-Horne-Zeilinger (GHZ) state:  $\ket{\ghz_n^d}=\frac{1}{\sqrt{d}}\sum_{i\in\bbZ_d}\ket{i}^{\otimes n}$ is a $1$-uniform state. If  $n=2$ or $3$, the GHZ state is an AME state. It is known that $k$-uniform states are connected to QECCs.

\begin{lemma}[\cite{scott2004multipartite,huber2020quantum}]\label{lemma:k_uniform}
    A subspace $\cQ$ of $\cH_{[n]}$ with dimension $K$ is a pure $((n,K,\delta))_d$ QECC if and only if $\ket{\psi}$ is a $(\delta-1)$-uniform state for all $\ket{\psi}\in \cQ$.
\end{lemma}

\subsection{Shor-Laflamme weight enumerators and Rains unitary enumerators}

 For a subspace $\cQ$ of $\cH_{[n]}$ with dimension $K$, we denote $\Pi_\cQ$ as the projector on $\cQ$, and $\rho_{\cQ}$ as the normalized projector on $\cQ$, i.e., $\rho_{\cQ}=\frac{\Pi_\cQ}{K}$.
 The (unnormalized) \emph{Shor-Laflamme} weight  enumerators \cite{shor1997quantum} are defined as:
\begin{equation*}
\begin{aligned}
    A_j&:=\sum_{\wt(E_\alpha)=j, \,  E_\alpha\in \cE_d}\tr(E_{\alpha}\rho_{\cQ})\tr(E_{\alpha}^{\dagger}\rho_{\cQ});\\
    \end{aligned}
\end{equation*}
\begin{equation}
\begin{aligned}
    B_j&:=\sum_{\wt(E_\alpha)=j, \,  E_\alpha\in \cE_d}\tr(E_{\alpha}\rho_{\cQ} E_{\alpha}^{\dagger}\rho_{\cQ}).
\end{aligned}
\end{equation}
 The \emph{Rains unitary enumerators} \cite{rains1998quantum} are defined as:
\begin{equation}
\begin{aligned}
    A_j'&:=\sum_{|T|=j,  \, T\subseteq [n]} \tr[(\rho_\cQ)_T^2];\\
    B_j'&:=\sum_{|T|=j,  \, T\subseteq [n]} \tr[(\rho_\cQ)_{T^c}^2].
\end{aligned}
\end{equation}

Note that $\tr[(\rho_\cQ)_T^2]$ is called the \emph{purity} of the reduced state $(\rho_{\cQ})_T$.
The Shor-Laflamme weight enumerators and the Rains unitary enumerators  can be used to characterize the distances of QECCs.

\begin{lemma}[\cite{rains1998quantum,scott2004multipartite}]\label{lemma:distance}
    A subspace $\cQ$ of $\cH_{[n]}$ with dimension $K$ is an  $((n, K, \delta))_d$ QECC  if and only if
$KB_j= A_j$ for $0< j< \delta$ (or $KB_{\delta-1}'= A_{\delta-1}'$). In particular, $\cQ$ is a pure  $((n, K, \delta))_d$ QECC  if and only if $B_j= A_j=0$ for $0< j< \delta$ (or $KB_{\delta-1}'= A_{\delta-1}'=\binom{n}{\delta-1}d^{1-\delta}$).
\end{lemma}

For an $((n,2^k,\delta))_2$ stabilizer code with stabilizer $G$, $A_j$ counts the number of elements with weight $j$ in $f(G)$, and $2^kB_j$ counts the number of elements with weight $j$ in $f(C(G))$ \cite{shor1997quantum,rains1999quantum}.  A combinatorial interpretation of the Shor-Laflamme weight enumerators for CWS codes is provided in Ref.~\cite{9738648}.
 The Rains unitary enumerators can be understood as
the sum of the subsystem purities of the code.

\subsection{Shadow enumerators}

Let $\rho$ and $\sigma$ be two states on $\cH_{[n]}$ (the local dimension does not necessarily have to be two or even equal), and $T$ be an arbitrary subset of $[n]$,  the (normalized) \emph{shadow enumerator} \cite{rains1998quantum} of $\rho$ and $\sigma$ on $T$ is defined as:
\begin{equation}\label{eq:st}
    s_{T}(\rho,\sigma):=\frac{1}{2^n}\sum_{S\subseteq [n]}(-1)^{|S\cap T^c|}\tr(\rho_{S}\sigma_{S}),
\end{equation}
and the \emph{shadow inequalities} \cite{rains2000polynomial} are:
\begin{equation}
    s_{T}(\rho,\sigma)\geq 0, \quad \forall\, T\subseteq [n].
\end{equation}
When $\rho=\sigma$ is an $n$-qubit state, the  shadow enumerator has another definition \cite{rains1999quantum}:
\begin{equation}
    s_{T}(\rho,\rho):=\frac{1}{2^n}\sum_{\supp(E_{\alpha})=T,\, E_{\alpha}\in \cE_2}\tr(\rho E_{\alpha} \widetilde{\rho} E_{\alpha}),
\end{equation}
where $\widetilde{\rho}=Y^{\otimes n}\overline{\rho}Y^{\otimes n}$, and $\overline{\rho}$ is the conjugate of $\rho$. Note that $\rho\longmapsto \widetilde{\rho}$ is called the \emph{state inversion map} \cite{hall2005multipartite,wyderka2018constraints,eltschka2018distribution}. In Appendix~\ref{appendix:lemma_equivalent}, we show that these two definitions of shadow enumerators are equivalent when $\rho=\sigma$ is an $n$-qubit state.

For an $((n,2^k,\delta))_2$ stabilizer code $\cQ$ with stabilizer $G$, we denote $G_0=\{g\in G\mid \text{$\wt(g)$ is even}\}$ ($G_0$ is a subgroup of $G$). The \emph{shadow} of $G$ is defined as \cite{rains1999quantum}:
\begin{equation}
S(G):=\begin{cases}
    f(C(G)), & \text{$\wt(g)$ is even $\forall \, g\in G$;}   \\
f(C(G_0))\setminus f(C(G)),
& \text{$\wt(g)$ is odd $\exists \, g\in G$}.
\end{cases}
\end{equation}
Let $s_j=\sum_{|T|=j,\, T\subseteq [n]}s_T(\rho_\cQ,\rho_\cQ)$, then $2^{n+k}s_j$ counts the number of elements with weight $j$ in the shadow $S(G)$ \cite{rains1999quantum}.

The shadow inequalities play an important role in quantum information theory. For example, the shadow inequalities give $2^{n-1}-1$ monogamy inequalities (see Appendix~\ref{appendix: monogamy} for the details):

\begin{equation}
   \sum_{ S\subseteq [n]}(-1)^{|S\cap T|}C_{S|S^c}^2(\ket{\psi})\leq 0,  \  |T| \ \text{is even}, \ \emptyset\neq T\subseteq [n],
\end{equation}
where $\ket{\psi}\in \cH_{[n]}$,  $C_{S|S^c}(\ket{\psi})=\sqrt{2\left[1-\tr(\rho_S^2)\right]}$ is the \emph{concurrence} \cite{wootters1998entanglement}, and $\rho=\ketbra{\psi}{\psi}$.
In particular, if $n=3$ and $T=\{1,2\}$, we would obtain the triangle inequality \cite{zhu2015generalized,yang2022entanglement}:
  $C_{3|12}^2(\ket{\psi})\leq  C_{1|23}^2(\ket{\psi})+ C_{2|13}^2(\ket{\psi})$. The shadow inequalities can also be used to show some non-existence results for AME states (i.e., pure $((n,1,\fl{n}{2}+1))_d$ QECCs)\cite{gisin1998bell,higuchi2000entangled,helwig2012absolute,gour2010all,scott2004multipartite,huber2018bounds}. For example, assume $\ket{\psi}$ is a $4$-qubit AME state. In the shadow inequalites, let $\rho=\sigma=\ketbra{\psi}{\psi}$, and $T=\emptyset$, then
\begin{equation}
\begin{aligned}
     \sum_{S\subseteq [4]}(-1)^{|S\cap T^c|}\tr(\rho_S^2)=&1-\binom{4}{1}\times\frac{1}{2}+\binom{4}{2}\times\frac{1}{4}\\&-\binom{4}{3}\times\frac{1}{2}+1=-\frac{1}{2}<0.
\end{aligned}
\end{equation}
Hence, $4$-qubit AME states do not exist \cite{higuchi2000entangled,huber2018bounds}.

The Shor-Laflamme weight  enumerators, Rains unitary enumerators, and shadow enumerators can be transformed into one another through the quantum MacWilliams identities \cite{shor1997quantum,rains1998quantum,rains1999quantum,huber2018bounds}. For QECCs, these quantum weight enumerators are all non-negative, and they can be used to give some linear programming bounds for various types of QECCs \cite{shor1997quantum,rains1998quantum,lai2017linear,grassl2017codes,ouyang2020linear}. For example, the famous bound for an $((n,K,\delta))_d$ QECC is the quantum Singltion bound: $K\leq d^{n-2\delta+2}$, which can be also obtained from the non-negativity of the Shor-Laflamme weight enumerators and the Rains unitary enumerators \cite{rains1999nonbinary}. In this work, we focus on calculating these quantum weight enumerators using quantum circuits.

\section{The $n$-qubit  parallelized SWAP test}\label{sec:swap}
In this section, we introduce the $n$-qubit  parallelized SWAP test.
For a bitstring $\bz=(z_1, z_2, \ldots, z_n)\in \bbZ_2^n$, we denote $\supp(\bz):=\{i\mid z_i\neq 0, \, \forall \, 1\leq i\leq n\}$, and $\wt(\bz):=|\supp(\bz)|$. Let $\bu=(u_1, u_2, \ldots, u_n), \bv=(v_1, v_2, \ldots, v_n)\in \bbZ_2^n$, then the inner product of $\bu$ and $\bv$ is $\bu\cdot \bv:=\sum_{i=1}^nu_iv_i=|\supp(\bu)\cap \supp(\bv)|$. Furthermore, the \emph{overlap} between two $n$-partite states $\rho$ and $\sigma$ is defined as $\tr(\rho\sigma)$.

Ref.~\cite{beckey2021computable} provides a useful tool for analyzing quantum weight enumerators---the
$n$-qubit parallelized SWAP test, and our central idea is based on this work.
Let $\rho$ and $\sigma$ be  two  states on $\cH_{[n]}$ (the local dimension does not necessarily have to be two or even equal), then the $n$-qubit parallelized SWAP test for $\rho$ and $\sigma$ is given in  Fig.~\ref{fig:swap}. We denote $p(\bz)$ as the probability of measuring bitstring $\bz\in \bbZ_2^n$  on the $n$ ancilla  qubits. It is known that $p(\bz)=|\braket{\lambda}{\bz}|^2$ when measuring bitstring $\bz\in \bbZ_2^n$ on an $n$-qubit state $\ket{\lambda}$. Then we can express $p(\bz)$  as a function of the  overlaps between the reduced states of $\rho$ and $\sigma$.

\begin{proposition}\label{proposition: pz}
Let  $\rho$ and $\sigma$ be two  states on $\cH_{[n]}$. In the $n$-qubit parallelized SWAP test of $\rho$ and $\sigma$, we have
\begin{equation}
p(\bz)= \frac{1}{2^n}\sum_{S\subseteq [n]}(-1)^{|S\cap T|}\tr(\rho_S\sigma_S),
\end{equation}
where $\bz\in \bbZ_2^n$, and $T=\supp(\bz)$.
\end{proposition}
\begin{figure}[t]
		\centering
		\includegraphics[scale=0.35]{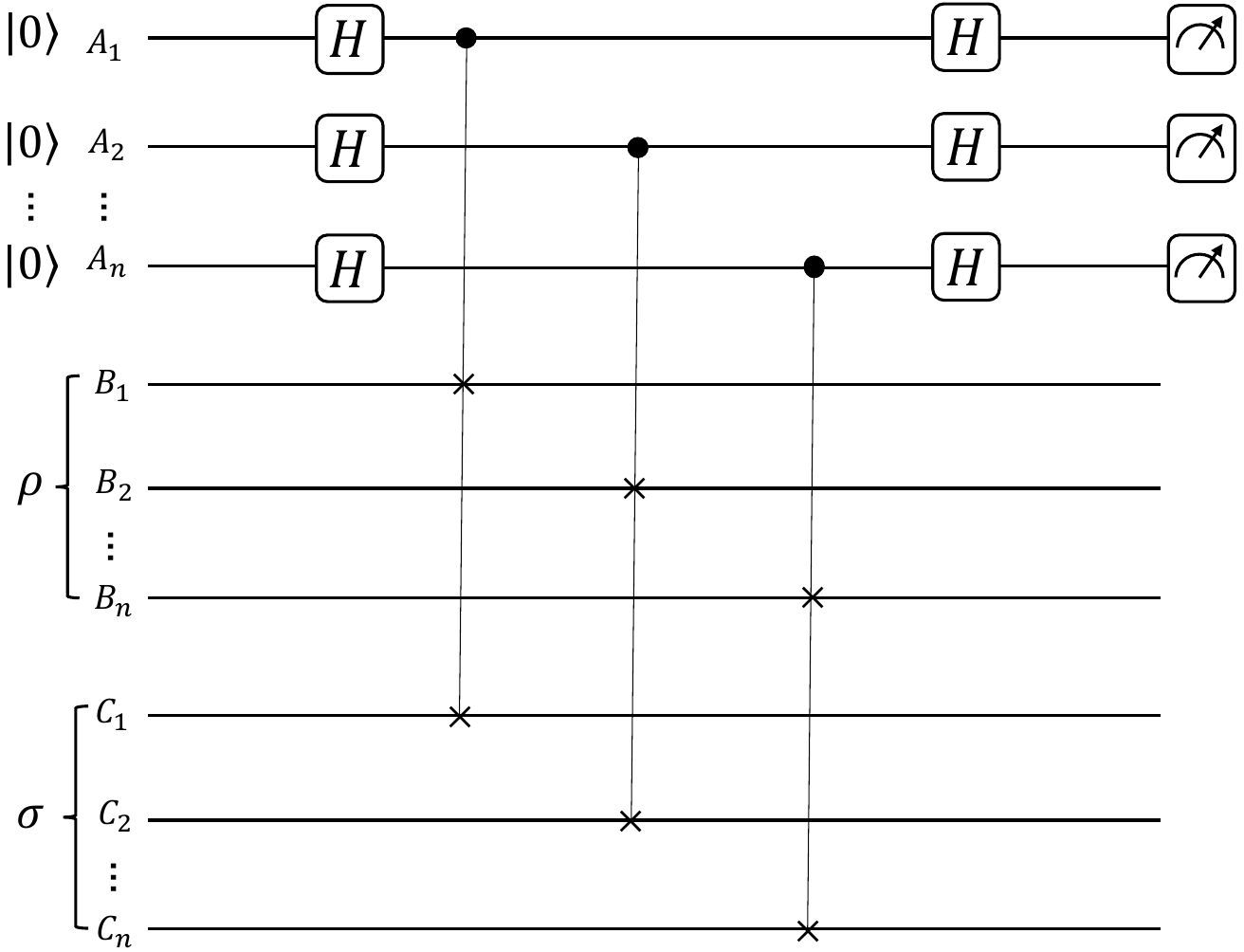}
		\caption{Quantum circuit for the $n$-qubit parallelized SWAP test of two $n$-partite states $\rho$ and $\sigma$. Each $A_i$ is an ancilla qubit,   each $H$ is a Hadamard gate, and each controlled-SWAP gate is performed on $A_i$, $B_i$, and $C_i$.    } \label{fig:swap}
\end{figure}

The proof of Proposition~\ref{proposition: pz}
is given in Appendix~\ref{appendix: proposition 1}.
Proposition~\ref{proposition: pz} extends the results of Ref.~\cite{beckey2021computable}, which only considers the case where $\rho=\sigma$ is an $n$-qubit pure state. We present some examples for the $n$-qubit parallized SWAP test with some equal pure states and some non-equal pure states in Table~\ref{Table: probability}.

The overlap between two $n$-partite states $\rho$ and $\sigma$, i.e., $\tr(\rho\sigma)$  can be computed from the standard SWAP test \cite{barenco1997stabilization,buhrman2001quantum,garcia2013swap,cincio2018learning,harrow2013testing, fanizza2020beyond}. This means that $2^n$ standard SWAP tests are needed to compute the  overlap  $\tr(\rho_T\sigma_T)$ for all $T\subseteq [n]$. However,  we show that only one $n$-qubit parallelized SWAP test is enough to compute all the overlaps $\tr(\rho_T\sigma_T)$.

\begin{proposition}\label{proposition: overlap}
   Let  $\rho$ and $\sigma$ be two states on $\cH_{[n]}$, and  $T\subseteq [n]$. In the $n$-qubit parallelized SWAP test of $\rho$ and $\sigma$, we have
   \begin{equation}
   \tr(\rho_T\sigma_T)=\sum_{\bz\in \bbZ_2^n}(-1)^{\bz\cdot \textbf{t}}p(\bz),
   \end{equation}
where $\bt\in \bbZ_2^n$,  and $\supp(\bt)=T$.
\end{proposition}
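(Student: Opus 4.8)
The plan is to prove Proposition~\ref{proposition: overlap} by treating it as the Fourier inversion of Proposition~\ref{proposition: pz} over the group $\bbZ_2^n$. Proposition~\ref{proposition: pz} writes each probability $p(\bz)$ as a $\pm 1$-weighted average of the overlaps $\tr(\rho_S\sigma_S)$, so the forward map is, up to the factor $1/2^n$, a Walsh--Hadamard transform; the overlaps should therefore be recoverable by applying essentially the same transform in reverse. The first step is simply to substitute the expression for $p(\bz)$ from Proposition~\ref{proposition: pz} into the right-hand side of the claimed identity, $\sum_{\bz\in\bbZ_2^n}(-1)^{\bz\cdot\bt}p(\bz)$.

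The second step is a notational reconciliation that makes the character structure explicit. For a subset $S\subseteq[n]$ let $\bs\in\bbZ_2^n$ denote its indicator vector, so $\supp(\bs)=S$. Then the sign in Proposition~\ref{proposition: pz}, namely $(-1)^{|S\cap\supp(\bz)|}$, equals $(-1)^{\bz\cdot\bs}$, since $\bz\cdot\bs=|\supp(\bz)\cap S|$ by the definition of the inner product on $\bbZ_2^n$ stated above. After this rewriting the double sum factorizes: exchanging the order of summation, the right-hand side becomes $\frac{1}{2^n}\sum_{S\subseteq[n]}\tr(\rho_S\sigma_S)\sum_{\bz\in\bbZ_2^n}(-1)^{\bz\cdot(\bs+\bt)}$, where $\bs+\bt$ denotes addition in $\bbZ_2^n$ and $\supp(\bt)=T$ is the fixed target.

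The third and final step invokes orthogonality of characters on $\bbZ_2^n$: the inner sum $\sum_{\bz\in\bbZ_2^n}(-1)^{\bz\cdot\bw}$ equals $2^n$ when $\bw=\mathbf{0}$ and vanishes otherwise. Because $\bs+\bt=\mathbf{0}$ in $\bbZ_2^n$ precisely when $\bs=\bt$, i.e. when $S=T$, every term collapses except $S=T$, leaving $\frac{1}{2^n}\cdot 2^n\,\tr(\rho_T\sigma_T)=\tr(\rho_T\sigma_T)$, which is the stated identity.

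I do not anticipate a genuine obstacle, as this is a clean Walsh--Hadamard inversion; the only points requiring care are the bookkeeping between the two roles of the support set (the running $\supp(\bz)$ inside Proposition~\ref{proposition: pz} versus the fixed $T=\supp(\bt)$ in Proposition~\ref{proposition: overlap}) and ensuring that character orthogonality is applied to $\bs+\bt$ rather than to $\bs$ or $\bt$ separately. Equivalently, one can phrase the argument abstractly by observing that the $2^n\times2^n$ matrix with entries $(-1)^{\bz\cdot\bs}$ is a scaled Hadamard matrix satisfying $H^2=2^n I$, so that Proposition~\ref{proposition: pz} and Proposition~\ref{proposition: overlap} are literally inverse relations.
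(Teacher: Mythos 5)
Your proposal is correct and is essentially the paper's own proof: the paper likewise substitutes the expression for $p(\bz)$ from Proposition~\ref{proposition: pz}, exchanges the two sums, and collapses the inner sum $\sum_{R\subseteq[n]}(-1)^{|R\cap T|+|S\cap R|}$ via the same character-orthogonality fact (its Lemma on $\sum_{\bz}(-1)^{\bz\cdot\bu}$), proved by exactly your reduction to indicator vectors and $\bs+\bt=\mathbf{0}$ iff $S=T$. The only difference is presentational: the paper isolates the orthogonality identity as a standalone fact before summing, whereas you frame the whole computation as Walsh--Hadamard inversion.
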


The proof of Proposition~\ref{proposition: overlap} is given in Appendix~\ref{appendix: proposition 2}.
Note that the probability distribution of the $n$-qubit parallelized SWAP test of $\rho$ and $\sigma$  can also be used to characterize $\rho$ and $\sigma$ to some extent. For example,
     $p(0)=1$ if and only if $\rho=\sigma=\bigotimes_{i=1}^n\ketbra{\psi_i}{\psi_i}$;
 $p(\bz)=\frac{1}{2^n}$ for all $\bz\in \bbZ_2^n$ if and only if $\rho_T\sigma_T=\mathbf{0}$ for all $\emptyset\neq T\subseteq [n]$ (see Appendix~\ref{appendix:charaterizing}  for details).

\section{Connection between quantum weight enumerators and the $n$-qubit parallelized SWAP test}\label{sec:connection}

Now, we are in a position to present our main results, which establish a connection between quantum weight enumerators and the
$n$-qubit parallelized SWAP test.

\begin{theorem}\label{theorem: shadow}
  Let $\rho$ and $\sigma$ be two states on $\cH_{[n]}$, and $T\subseteq [n]$. In the $n$-qubit parallelized SWAP test of $\rho$ and $\sigma$,  we have
  \begin{equation}
    s_{T}(\rho,\sigma)=p(\bz),
\end{equation}
where $\bz\in \bbZ_2^n$, and $\supp(\bz)=T^c$.
\end{theorem}

Theorem~\ref{theorem: shadow} is obtained from Proposition~\ref{proposition: pz} and Eq.~\eqref{eq:st}. According to Theorem~\ref{theorem: shadow},  we
give a computable and operational meaning for the shadow enumerators, i.e., the shadow enumerator of $\rho$ and $\sigma$ on $T$ is the probability of measuring the bitstring $\bz\in \bbZ_2^n$ with $\supp(\bz)=T^c$  in the  $n$-qubit parallelized SWAP test of $\rho$ and $\sigma$. The shadow inequalities were a conjecture given in Ref.~\cite{rains1998quantum}, and proven in Ref.~\cite{rains2000polynomial}.
Since $p(\bz)\geq 0$, we  provide an elegant proof for the shadow inequalities. Note that  Table~\ref{Table: probability} also lists the shadow enumerators of some equal pure states and  non-equal pure
states.

Based on Proposition~\ref{proposition: overlap} and the quantum MacWilliams identities \cite{shor1997quantum,rains1998quantum,rains1999quantum,huber2018bounds},
we can express both the Shor-Laflamme weight enumerators and the Rains unitary enumerators as  functions of the probabilities in the $n$-qubit parallelized SWAP test.
\begin{theorem}\label{theorem: enumerators}
Let $\cQ$ be an $((n,K,\delta))_d$ QECC. In the $n$-qubit parallelized SWAP test of $\rho_{\cQ}$ and $\rho_{\cQ}$, we have
  \begin{equation}
 \begin{aligned}
     A_j=&\sum_{k=0}^j(-1)^{j-k}\binom{n-k}{j-k}d^k \sum_{\wt(\bt)=k,\,  \bt\in \bbZ_2^n}\sum_{\bz\in \bbZ_2^n}(-1)^{\bz\cdot \textbf{t}}p(\bz);\\
         B_j=&\sum_{k=0}^j(-1)^{j-k}\binom{n-k}{j-k}d^k \sum_{\begin{subarray}{c}
   \wt(\bt)=n-k,\\  \bt\in \bbZ_2^n
     \end{subarray}} \sum_{\bz\in \bbZ_2^n}(-1)^{\bz\cdot \textbf{t}}p(\bz);\\
 A_j'=&\sum_{\wt(\bt)=j,\,  \bt\in \bbZ_2^n} \sum_{\bz\in \bbZ_2^n}(-1)^{\bz\cdot \textbf{t}}p(\bz);\\
        B_j'=&\sum_{\wt(\bt)=n-j,\,  \bt\in \bbZ_2^n} \sum_{\bz\in \bbZ_2^n}(-1)^{\bz\cdot \textbf{t}}p(\bz).
     \end{aligned}
    \end{equation}
    \end{theorem}
\begin{proof}
By Proposition~\ref{proposition: overlap}, we have
\begin{equation}
A_j'=\sum_{|T|=j,  \, T\subseteq [n]} \tr[(\rho_\cQ)_T^2]=\sum_{\wt(\bt)=j, \,  \bt\in \bbZ_2^n} \sum_{\bz\in \bbZ_2^n}(-1)^{\bz\cdot \textbf{t}}p(\bz).
\end{equation}
According to  the quantum MacWilliams identities \cite{shor1997quantum,rains1998quantum,rains1999quantum,huber2018bounds}, we have $B_j'=A_{n-j}'$,  $A_j=\sum_{k=0}^j(-1)^{j-k}\binom{n-k}{j-k}d^k A_k'$,  and  $B_j=\sum_{k=0}^j(-1)^{j-k}\binom{n-k}{j-k}d^k B_k'$. This completes the proof.
\end{proof}

\begin{table}[t]	
\renewcommand\arraystretch{1.7}	
	\centering
	\renewcommand\tabcolsep{2pt}
     \caption{The
probability distribution of the $n$-qubit parallelized SWAP
test of $\rho$ and $\sigma$, where $\rho, \sigma\in \{\ket{0_n}=\ket{0}^{\otimes n}, \ket{1_n}=\ket{1}^{\otimes n}, \ket{\w_n}=\frac{1}{\sqrt{n}}\sum_{\wt(\bv)=1,\, \bv\in \bbZ_2^n}\ket{\bv}, \ket{\ghz_n}=\frac{1}{\sqrt{2}}(\ket{0_n}+\ket{1_n})\}$. We denote $p(k)=p(\bz)$ for any $\wt(\bz)=k$, $\bz\in \bbZ_2^n$.
 }\label{Table: probability}
	\begin{tabular}{|c|c|c|c|c|c|}
	\hline
	 $\rho$ &$\sigma$  &$p(0)$ &$p(1)$ &$p(2)$    &$p(k), 3\leq k\leq n$ \\
		\hline
  $\ket{0_n}$ &$\ket{0_n}$ &$1$ &$0$ &$0$ &$0$ \\
 $\ket{\w_n}$ &$\ket{\w_n}$ &$\frac{n+1}{2n}$ &$0$  &$\frac{1}{n^2}$ &$0$ \\
  $\ket{\ghz_n}$ &$\ket{\ghz_n}$ &$\frac{2^{n-1}+1}{2^n}$ &$0$  &$\frac{1}{2^n}$ &$\frac{1+(-1)^k}{2^{n+1}}$ \\
$\ket{0_n}$ &$\ket{1_n}$ &$\frac{1}{2^n}$ &$\frac{1}{2^n}$ &$\frac{1}{2^n}$ &$\frac{1}{2^n}$\\
$\ket{\w_n}$ &$\ket{0_n}$ &$\frac{1}{2}$ &$\frac{1}{2n}$ &$0$ &$0$ \\
$\ket{\w_n}$ &$\ket{1_n}$ &$\frac{1}{2^{n-1}}$ &$\frac{n-1}{2^{n-1}n}$ &$\frac{n-2}{2^{n-1}n}$ &$\frac{n-k}{2^{n-1}n}$\\
 $\ket{\ghz_n}$ &$\ket{0_n}$ &$\frac{2^n+1}{2^{n+1}}$ &$\frac{1}{2^{n+1}}$ &$\frac{1}{2^{n+1}}$ &$\frac{1}{2^{n+1}}$ \\
 $\ket{\ghz_n}$ &$\ket{\w_n}$ &$\frac{2^{n-2}+1}{2^n}$ &$\frac{2^{n-2}+n-1}{2^nn}$ &$\frac{n-2}{2^{n}n}$ &$\frac{n-k}{2^nn}$\\
\hline
	\end{tabular}
\end{table}

Since the $n$-qubit parallelized SWAP test can be implemented on a quantum computer \cite{buhrman2001quantum,garcia2013swap,cincio2018learning,harrow2013testing}, we can efficiently estimate these quantum weight enumerators through Theorem~\ref{theorem: shadow} and Theorem~\ref{theorem: enumerators}.
The complexity of estimating the quantum weight enumerators can be analyzed as follows:

1. For the shadow enumerators, $s_T(\rho,\rho)=p(\mathbf{z})$, where $\mathrm{supp}(\bz)=T^{c}$. So the corresponding unbiased estimator is $\hat{s}_{T}(\rho,\rho)=\frac{N_{\mathbf{z}}}{N}$, where $N_{\mathbf z}$ is the number of occurrences of $\mathbf z$ in the sample.  This gives the variance of the estimator $\mathrm{Var} \hat{s}_{T}(\rho,\rho)=\frac{1-s_{T}(\rho,\rho)^{2}}{N}$. To reach accuracy $\epsilon$ we require sampling times $N=O\left(  \frac{1-s_{T}(\rho,\rho)^{2}}{\epsilon ^{2}} \right)$.

2. For the Rains unitary enumerators, $A_{j}'= \sum_{\mathbf{z}\in \mathbb{Z}_{2}^{n}} f_{j}(\mathrm{wt}(\mathbf{z}))p(\mathbf{z})$ and $B_{j}'=\sum_{\mathbf{z}\in \mathbb{Z}_{2}^{n}}f_{n-j}(\mathrm{wt}(\mathbf{z}))p(\mathbf{z})$. The corresponding unbiased estimator $\hat{A}_{j}'$ and $\hat{B}_{j}'$ has variance $\frac{1}{N} \left[\sum_{\mathbf{z}\in \mathbb{Z}_{2}^{n}}f_{j}(\mathrm{wt}(\mathbf{z}))^{2} p (\mathbf{z}) - A_{j}'^{2}\right]$ and $\frac{1}{N}\left[\sum_{\mathbf{z}\in \mathbb{Z}_{2}^{n}}f_{n-j}(\mathrm{wt}(\mathbf{z}))^{2}p(\mathbf{z}) - B_{j}'^{2}\right]$, respectively. To reach accuracy $\epsilon$ we require $N=O\left(  \frac{\sum_{\mathbf{z}\in \mathbb{Z}_{2}^{n}}f_{j}(\mathrm{wt}(\mathbf{z}))^{2} p (\mathbf{z}) - A_{j}'^{2}}{\epsilon ^{2}} \right)$ and $N=O\left(  \frac{\sum_{\mathbf{z}\in \mathbb{Z}_{2}^{n}}f_{n-j}(\mathrm{wt}(\mathbf{z}))^{2}p(\mathbf{z}) - B_{j}'^{2}}{\epsilon ^{2}} \right)$ sampling times, respectively.

 3. For the Shor-Laflamme weight enumerators, we have $A_{j}= \sum_{k=0}^{j}(-1)^{ j-k} {n-k\choose j-k} d^{k}A_{k}'= \sum_{\mathbf{z}\in \mathbb{Z}_{2}^{n}}g_{j}(\mathrm{wt}(\mathbf{z}))$ and $B_{j}= \sum_{k=0}^{j}(-1)^{j-k}{n-k \choose j-k} d^{k} B_{k}' = \sum_{\mathbf{z}}h_{j}(\mathrm{wt}(\mathbf{z}))p(\mathbf{z})$, where $g_j,h_j$ are some coefficients. We can find the variance of their unbiased estimators and estimate the required sampling times using similar techniques.

\section{applications}\label{sec:application}
In this section, we present several applications, including calculating the distances of QECCs, determining the
$k$-uniformity of pure states, and evaluating multipartite entanglement measures.

\subsection{Calculating the distances of QECCs}
The first application involves calculating the distances of  QECCs using the  $n$-qubit parallelized SWAP test. Determining the distances of QECCs is known to be a challenging problem. However, by leveraging Lemma~\ref{lemma:distance} and Theorem~\ref{theorem: enumerators}, we propose a method to address this difficulty.

Let $\cQ$ be a subspace of $\cH_{[n]}$ with dimension $K$. The procedure consists of the following four steps:
\begin{enumerate}[1.]
    \item prepare two copies of normalized projector $\rho_{\cQ}$;
    \item apply the $n$-qubit parallelized SWAP test of $\rho_{\cQ}$ and $\rho_{\cQ}$  to  obtain the probability distribution (shadow enumerators);
    \item compute the  Rains unitary enumerators (we only consider the Rains unitary enumerators for simplicity) by using Theorem~\ref{theorem: enumerators};
    \item determine the distance $\delta=\max\{j+1\mid KB_j'-A_j'=0\}$ from Lemma~\ref{lemma:distance}.
\end{enumerate}

The complexity of this algorithm is analyzed in the Appendix~\ref{appendix:complexity}. Note that the algorithm requires
$3n$ qubits,
$2n$ Hadamard gates, and
$n$ controlled-SWAP gates.
Next, we show an example. Let $\rho_{\cQ_5}$ be   a  5-qubit normalized projector with dimension $\dim(\cQ_5)=2$.  Assume that we obtain the probability distribution   from the $5$-qubit parallelized SWAP test of $\rho_{\cQ_5}$ and $\rho_{\cQ_5}$: $p(0)=\frac{9}{32}$, $p(1)=\frac{3}{64}$,  $p(2)=\frac{3}{64}$, $p(3)=0$, $p(4)=0$, and  $p(5)=\frac{1}{64}$, where  $p(k)=p(\bz)$ for any $\wt(\bz)=k$, $\bz\in \bbZ_2^5$. By computing the Rains unitary enumerators, we have $A'_0=1$, $A_1'=\frac{5}{2}$, $A_2'=\frac{5}{2}$, $A_3'=\frac{5}{4}$, $A_4'=\frac{5}{4}$, $A_5'=\frac{1}{2}$; and  $B_0'=\frac{1}{2}$, $B_1'=\frac{5}{4}$, $B_2'=\frac{5}{4}$, $B_3'=\frac{5}{2}$, $B_4'=\frac{5}{2}$, $B_5'=1$. Since   $2B_2'-A_2'=0$, and $2B_3'-A_3'\neq 0$,  we can determine that the distance of $\cQ_5$ is $3$.
 Actually, $\cQ_5$ is chosen from the five-qubit code \cite{knill2001benchmarking}.
See Appendix~\ref{appendix: probability} for the complete quantum weight enumerators of the five-qubit code $((5,2,3))_2$, the Steane code $((7,2,3))_2$ \cite{steane1996multiple} and the Shor code $((9,2,3))_2$ \cite{shor1995scheme}.

When preparing a QECC in experiment, noise is inevitable, which leads to a certain distance between the prepared QECC and the target QECC.  In this context, what kind of impact will this have on the distances of QECCs?   Let $\cQ$ be the target $((n,K,\delta))_d$ QECC.  For the perturbed QECC $\cQ'$, we use the Rains unitary enumerators to quantify how closely the distance of $\cQ'$ approaches that of the target QECC $\cQ$.
We say that the distance of a QECC $Q'$ is \emph{$\epsilon$-close} to the distance of $\cQ$ if $|KB_{\delta-1}'-A'_{\delta-1}|<\epsilon$.
In the $n$-qubit parallelized SWAP test, we need to prepare two copies of $\rho_{\cQ}$.
Due to the noise, we may prepare $\rho_{\cQ'}$,
where the trace distance between
$\rho_{\cQ'}$  and $\rho_{\cQ}$ satisfies $D(\rho_{\cQ'},\rho_{\cQ})=\frac{1}{2}\tr\left[\sqrt{(\rho_{\cQ'}-\rho_{\cQ})^{\dagger}(\rho_{\cQ'}-\rho_{\cQ})}\right]\leq\epsilon$.

\begin{proposition}\label{proposition: robustness}
  If $\cQ$ is an $((n,K,\delta))_d$ QECC, and $D(\rho_{\cQ'},\rho_{\cQ})\leq\epsilon$, then the distance of $\cQ'$ is  $\left[4(K+1)\binom{n}{\delta-1}\epsilon\right]$-close to the distance of $\cQ$.
\end{proposition}

Proposition~\ref{proposition: robustness} indicates that if the perturbed QECC $\cQ'$ is close to the target QECC $\cQ$, then the distance of $\cQ'$ is also close to the distance of $\cQ$.
The proof of Proposition~\ref{proposition: robustness} is given in Appendix~\ref{appendix: robustness}.
When measuring probabilities in the $n$-qubit parallelized SWAP test, some errors may  occur, which will also have a certain impact on the Rains unitary enumerators.  We do not intend to discuss this issue in this work.

\subsection{Determining the
$k$-uniformity of pure states}

The second application involves determining the $k$-uniformity of  pure states using the
$n$-qubit parallelized SWAP test. According to Lemma~\ref{lemma:k_uniform}, a pure $((n, 1, k+1))_d$ QECC corresponds to a $k$-uniform  state  in $\cH_{[n]}$. Thus,
the method for determining the $k$-uniformity of pure states is similar to the method for calculating the distances of QECCs.  Note that, we only need to consider $A'_j$, since $A'_j=B'_j$ for a pure state, and the $k$-uniformity $k=\max\{k'\mid A_{k'}-\binom{n}{k'}d^{-k'}=0\}$.
Given a $4$-qubit pure state $\rho_4=\ketbra{\psi}{\psi}$, assume that we obtain the probability distribution from the $4$-qubit parallelized SWAP test of $\rho_4$ and $\rho_4$: $p(0)=\frac{9}{16}$, $p(1)=0$, $p(2)=\frac{1}{16}$, $p(3)=0$, and $p(4)=\frac{1}{16}$. By calculating the Rains unitary enumerators, we have $A_0'=1$, $A_1'=2$, $A_2'=3$, $A_3'=2$, and $A_4'=1$. Since $A'_1-2=0$, and $A_2'- \frac{3}{2}\neq 0$, we can determine that $\ket{\psi}$ is a $1$-uniform state. Actually, $\ket{\psi}$ is chosen from the $4$-qubit $\ghz$ state.

Note that one may think of directly determining the $k$-uniformity of a pure state by calculating the local purity of the pure state on each $s$ subsystems, such as using the standard SWAP test \cite{barenco1997stabilization,buhrman2001quantum,garcia2013swap,cincio2018learning,harrow2013testing}, two two-body gates \cite{nakazato2012measurement}, and quantum overlapping tomography \cite{cotler2020quantum,yang2023experimental,hansenne2024optimal,wei2024optimal}. However, if the local purity is not equal to $\frac{1}{d^s}$, one need to compute the local purity on each $s-1$ subsystems, and so on. This method would consume a lot of resources. In contrast, a single $n$-qubit parallel SWAP test is sufficient to calculate all the local purities.

\subsection{Evaluating multipartite entanglement measures}
The third application involves evaluating multipartite entanglement measures using the
$n$-qubit parallelized SWAP test. Let $\ket{\psi}$ be a pure state in $\cH_{[n]}$, and $\cS=\{S_1, S_2, \cdots, S_m\}$
be a collection of subsets of $[n]$ with $S_i\subseteq [n]$ for $1\leq i\leq m$  ($\cS\neq \{\emptyset\}$). Then we define the \emph{fixed partition measure} on $\cS$:
\begin{equation}
    E_{\cS}(\ket{\psi})=1-\frac{1}{m}\sum_{i=1}^m\tr{(\rho_{S_i}^2)},
\end{equation}
where $\rho=\ketbra{\psi}{\psi}$. Note that  $E_{\cS}(\ket{\psi})=0$ when $\ket{\psi}$ is a product state, i.e. $\ket{\psi}=\bigotimes_{i=1}^n\ket{\varphi_i}$. Since the local purity $\tr{(\rho_{S}^2)}$   cannot decrease on average under local operations
and classical communications (LOCC) for $S\subseteq [n]$ \cite{beckey2021computable}, the fixed partition measure is nonincreasing on
average under LOCC. These two facts show that  $E_{\cS}(\ket{\psi})$ is a well-defined pure state entanglement measure. However, $E_{\cS}$ is  unfaithful for some $\cS$. For example, let $n=3$, $\cS=\{\{1\}\}$, $\ket{\psi}=\frac{1}{\sqrt{2}}\ket{0}(\ket{00}+\ket{11})$, then  $E_{\cS}(\ket{\psi})=0$ for the entangled state $\ket{\psi}$.

The fixed partition measure $E_{\cS}(\ket{\psi})$ quantifies $\ket{\psi}$'s entanglement across all the bipartitions $S\mid S^c$, with $S\in \cS$. When  $\ket{\psi}$ is maximally entangled across  the bipartition $S\mid S^c$ for all $S\in \cS$,  $E_{\cS}(\ket{\psi})$ obtains the maximum value. When $\cS$ consists of all the $1$-subsets of $[n]$, the fixed partition measure is the average subsystem linear entropy \cite{brennen2003observable}; when $\cS$ consists of all the $k$-subsets of $[n]$ with $k\leq \fl{n}{2}$, the fixed partition measure is the (unnormalized) Scott measure \cite{scott2004multipartite}; and when $T\subseteq [n]$ and $\cS$ consists of all the subsets of $T$, the fixed partition measure is the concentratable entanglement \cite{beckey2021computable}. According to Proposition~\ref{proposition: overlap}, the fixed partition measure can be computed from the $n$-qubit parallelized  SWAP test.
\begin{proposition}\label{proposition: fixedmeasure}
 Let $\ket{\psi}$ be a pure state in $\cH_{[n]}$.  In the $n$-qubit parallelized SWAP test of $\ketbra{\psi}{\psi}$ and $\ketbra{\psi}{\psi}$,  we have
 \begin{equation}
     E_{\cS}(\ket{\psi})=1-\frac{1}{m}\sum_{i=1}^m\sum_{\bz\in \bbZ_2^n}(-1)^{\bz\cdot \textbf{t}_i}p(\bz),
   \end{equation}
where $\cS=\{S_1, S_2,\cdots, S_m\}$ with $S_i\subseteq [n]$, $\bt_i\in \bbZ_2^n$,  and $\supp(\bt_i)=S_i$ for $1\leq i\leq m$.
\end{proposition}

Proposition~\ref{proposition: fixedmeasure} generalizes the results in \cite{beckey2021computable}, which shows that the concentratable entanglement can be computed from the $n$-qubit parallelized SWAP test.

\section{Conclusion}\label{sec:conclusion}
We established a direct connection between quantum weight enumerators and the
$n$-qubit parallelized SWAP test. Specifically, we demonstrated that the shadow enumerators correspond to probabilities observed in the
$n$-qubit parallelized SWAP test. Furthermore, we showed that both the Shor-Laflamme weight enumerators and the Rains unitary enumerators can be derived from this test. Leveraging these relationships, we developed a method to determine the distances of QECCs and the
$k$-uniformity of pure states. Additionally, we applied the
$n$-qubit parallelized SWAP test to compute multipartite entanglement measures.

 The connection between quantum weight enumerators and the
$n$-qubit parallelized SWAP test offers a novel perspective on quantum weight enumerators, and provides a powerful tool for characterizing and studying qLDPC codes, surface codes, and related QECCs. Since the SWAP test is a well-established quantum procedure, our method presents a practical approach for experimentally computing the distances of QECCs. Moreover, our quantum circuit-based proof of the shadow inequalities can be extended to generalized shadow inequalities. As generalized shadow inequalities involve multiple copies of states \cite{rains2000polynomial}, their circuit implementation may rely on the parallelized permutation test \cite{liu2024generalized}.

\textit{\textbf{Note added}}\textbf{---}Two months after uploading our manuscript to arXiv, we became aware that Ref.~\cite{miller2024experimental} independently provided a physical interpretation of shadow enumerators as triplet probabilities in two-copy Bell measurements, which differs from our interpretation of shadow enumerators as probabilities in the SWAP test.   In addition, compared to our work, Ref.~\cite{miller2024experimental} includes an analysis of the sample complexity required to achieve a given estimation accuracy.

\section{Acknowledgement}
The authors are very grateful to the associate editor and the
anonymous reviewers for providing many useful suggestions which have greatly improved the presentation
of our paper. The authors thank Felix Huber, Yu Ning, Zuo Ye, Wenjun Yu, Xingjian Zhang, Tianfeng Feng, Jue Xu, and Yue Wang for their helpful discussion and suggestions.

\appendix
\subsection{Two equivalent definitions of shadow enumerators.}\label{appendix:lemma_equivalent}

\begin{lemma}\label{lemma:equivalent}
  Let $\rho$ be an $n$-qubit state, then
  \begin{equation}
    \sum_{\supp(E_{\alpha})=T,\, E_{\alpha}\in \cE_2}\tr(\rho E_{\alpha} \widetilde{\rho}E_{\alpha})= \sum_{S\subseteq [n]}(-1)^{|S\cap T^c|}\tr(\rho_{S}^2).
  \end{equation}
\end{lemma}

\begin{proof}
We can also express $\widetilde{\rho}$ as \cite{wyderka2018constraints,eltschka2018distribution}:
\begin{equation}
 \widetilde{\rho}=\sum_{S\subseteq [n]}(-1)^{|S|}\rho_S\otimes I_{S^c},
\end{equation}
where $I_{S^c}$ is the identity matrix on $\otimes_{i\in S^c}\cH_i$.
  According to Observation~1 of Ref.~\cite{huber2018bounds}, we have
    \begin{equation}
  \sum_{\supp(E_{\alpha})\subseteq T,\, E_{\alpha}\in \cE_2}E_{\alpha}\rho E_{\alpha}=2^{|T|}\rho_{T^c}\otimes I_{T}.
    \end{equation}
By using the inclusion-exclusion principle, we have
\begin{equation}
      \sum_{\supp(E_{\alpha})=T,\, E_{\alpha}\in \cE_2}E_{\alpha}\rho E_{\alpha}=\sum_{R\subseteq T}(-1)^{|T\setminus R|}2^{|R|}\rho_{R^c}\otimes I_{R}.
\end{equation}
Next,
\begin{equation}
\begin{aligned}
   &\sum_{\supp(E_{\alpha})=T,\, E_{\alpha}\in \cE_2}\tr(\rho E_{\alpha} \widetilde{\rho}E_{\alpha})\\=&  \tr\left[\left(\sum_{\supp(E_{\alpha})=T,\, E_{\alpha}\in \cE_2}E_{\alpha}\rho E_{\alpha}\right) \widetilde{\rho}\right]\\
   =&\tr\left[\left(\sum_{R\subseteq T}(-1)^{|T\setminus R|}2^{|R|}\rho_{R^c}\otimes I_{R}\right)\right.\\&\left.\times \left(\sum_{S\subseteq [n]}(-1)^{|S|}\rho_{S}\otimes I_{S^c}\right)\right]\\
   =&\sum_{R\subseteq T}\sum_{S\subseteq [n]}(-1)^{|T\setminus R|+|S|}2^{|R|}\tr\left(\rho_{R^c}\otimes I_{R}\cdot \rho_{S}\otimes I_{S^c}\right)\\
   =&\sum_{R\subseteq T}\sum_{S\subseteq [n]}(-1)^{|T\setminus R|+|S|}2^{|R|+|R\setminus S|}\tr\left(\rho_{R^{c}\cap S}^2\right).
\end{aligned}
\end{equation}
\begin{figure}[t]
		\centering
		\includegraphics[scale=0.4]{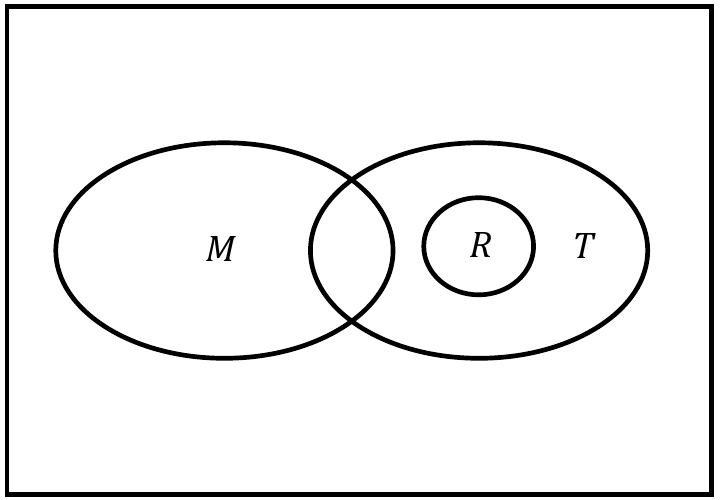}
		\caption{The Venn diagram  for the proof of Lemma~\ref{lemma:equivalent}. } \label{fig:Venn}
\end{figure}
Assume $R^{c}\cap S=M$, then we only need to show that
\begin{equation}
  \sum_{R\subseteq T}\sum_{S\subseteq [n]}\sum_{R^{c}\cap S=M}(-1)^{|T\setminus R|+|S|}2^{|R|+|R\setminus S|}=(-1)^{|M\cap T^c|}.
\end{equation}
Note that $M$ and $T$ are fixed. If $R\subseteq T$,  and $R^{c}\cap S=M$,  then  according to Fig.~\ref{fig:Venn}, we have $R\subseteq T\setminus M$. Similarly, if $S\subseteq [n]$ and $R^{c}\cap S=M$, then we have   $M\subseteq S\subseteq M\cup R$.
Thus we obtain that
 \begin{equation*}
 \begin{aligned}
   &\sum_{R\subseteq T}\sum_{S\subseteq [n]}\sum_{R^{c}\cap S=M}(-1)^{|T\setminus R|+|S|}2^{|R|+|R\setminus S|}\\&=\sum_{R\subseteq T\setminus M}\sum_{M\subseteq S\subseteq M\cup R}(-1)^{|T\setminus R|+|S|}2^{|R|+|R\setminus S|}\\
   &=\sum_{R\subseteq T\setminus M}\sum_{i=0}^{|R|}\binom{|R|}{i}(-1)^{|T\setminus R|+|M|+i}2^{|R|+|R|-i}\\
   &=(-1)^{|M|}\sum_{R\subseteq T\setminus M}(-1)^{|T\setminus R|}2^{|R|}\sum_{i=0}^{|R|}\binom{|R|}{i}(-1)^i2^{|R|-i}\\
    &=(-1)^{|M|}\sum_{R\subseteq T\setminus M}(-1)^{|T\setminus R|}2^{|R|}(2-1)^{|R|}\\
     &=(-1)^{|M|}\sum_{i=0}^{|T\setminus M|}\binom{|T\setminus M|}{i}(-1)^{|T|-i}2^{i}\\
                   \end{aligned}
 \end{equation*}
     \begin{equation}
 \begin{aligned}
     &=(-1)^{|M|+|T|}\sum_{i=0}^{|T\setminus M|}\binom{|T\setminus M|}{i}(-2)^{i}\\
     &=(-1)^{|M|+|T|+|T\setminus M|}\\
     &=(-1)^{|M\cap T^c|}.
\end{aligned}
 \end{equation}
\end{proof}

\subsection{Monogamy inequalities from the shadow inequalities}\label{appendix: monogamy}
Firstly, we need to show a lemma.

\begin{lemma}\label{lemma: sumzero}
Let $\bu\in \bbZ_2^n$, then

\begin{equation}
\sum_{\bz\in\bbZ_2^n}(-1)^{\bz\cdot \bu}=\begin{cases}
0,   &\text{if} \  \bu\neq \mathbf{0}; \\
 2^n,   &\text{if} \ \bu= \mathbf{0}. \\
\end{cases}
\end{equation}
Alternatively,
\begin{equation}
 \sum_{S\subseteq [n]}(-1)^{|S\cap T|}=\begin{cases}
0,   &\text{if} \  T\neq \emptyset; \\
 2^n,   &\text{if} \ T= \emptyset, \\
\end{cases}
\end{equation}
\end{lemma}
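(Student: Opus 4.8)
The plan is to exploit the fact that both the sign and the summation split into a product over the $n$ coordinates. Writing $\bu=(u_1,\dots,u_n)$ and $\bz=(z_1,\dots,z_n)$, I would first record the identity $\bz\cdot\bu=\sum_{i=1}^n z_iu_i$, so that $(-1)^{\bz\cdot\bu}=\prod_{i=1}^n(-1)^{z_iu_i}$. Because the sum over $\bz\in\bbZ_2^n$ ranges over all choices of the coordinates $z_i\in\{0,1\}$ independently, the sum of this product factorizes as a product of sums,
\begin{equation}
\sum_{\bz\in\bbZ_2^n}(-1)^{\bz\cdot\bu}=\prod_{i=1}^n\Bigl(\sum_{z_i\in\{0,1\}}(-1)^{z_iu_i}\Bigr).
\end{equation}
This factorization step is the only place where any care is needed, and it is justified by the usual distributive expansion of a product of two-term sums.

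Next I would evaluate each single-coordinate factor explicitly. For a fixed $i$, $\sum_{z_i\in\{0,1\}}(-1)^{z_iu_i}=1+(-1)^{u_i}$, which equals $2$ when $u_i=0$ and $0$ when $u_i=1$. Feeding this back into the product gives the dichotomy: if $\bu=\mathbf{0}$ then every factor equals $2$ and the product is $2^n$; if $\bu\neq\mathbf{0}$ then at least one coordinate has $u_i=1$, contributing a factor of $0$, so the whole product vanishes. This establishes the first displayed equation.

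Finally, I would deduce the second (set-theoretic) formulation by translating subsets into their indicator vectors. Identifying $S\subseteq[n]$ with $\bs\in\bbZ_2^n$ via $\supp(\bs)=S$, and $T$ with $\bt$ via $\supp(\bt)=T$, one has $|S\cap T|=|\supp(\bs)\cap\supp(\bt)|=\bs\cdot\bt$, and the correspondence $S\leftrightarrow\bs$ is a bijection between subsets of $[n]$ and vectors of $\bbZ_2^n$. Hence $\sum_{S\subseteq[n]}(-1)^{|S\cap T|}=\sum_{\bs\in\bbZ_2^n}(-1)^{\bs\cdot\bt}$, and the first part applied with $\bu=\bt$ gives $2^n$ when $T=\emptyset$ (i.e. $\bt=\mathbf{0}$) and $0$ otherwise. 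I do not anticipate a genuine obstacle here: the result is a standard character-orthogonality identity on $\bbZ_2^n$, and the only point worth stating cleanly is the indicator-vector bijection that shows the two formulations are literally the same statement.
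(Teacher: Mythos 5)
Your proof is correct, but it takes a different route from the paper's. The paper handles the case $\bu\neq\mathbf{0}$ with a translation (shift) argument: it picks $i\in\supp(\bu)$, so that $(-1)^{\mathbf{e}_i\cdot\bu}=-1$, and reindexes the sum via the bijection $\bz\mapsto\bz+\mathbf{e}_i$ of $\bbZ_2^n$, obtaining $-\sum_{\bz}(-1)^{\bz\cdot\bu}=\sum_{\bz}(-1)^{\bz\cdot\bu}$ and hence that the sum vanishes. You instead factorize the character sum coordinatewise,
\begin{equation}
\sum_{\bz\in\bbZ_2^n}(-1)^{\bz\cdot\bu}=\prod_{i=1}^n\bigl(1+(-1)^{u_i}\bigr),
\end{equation}
and read off the dichotomy from the single zero factor when some $u_i=1$. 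Both arguments are standard proofs of character orthogonality; the paper's shift trick is the one that generalizes verbatim to arbitrary finite abelian groups (replace $-1$ by a nontrivial character value), while your factorization exploits the product structure of $\bbZ_2^n$ and yields an explicit closed form for the sum in one line, arguably the more elementary computation. Note also that the paper's definition $\bu\cdot\bv=\sum_i u_iv_i=|\supp(\bu)\cap\supp(\bv)|$ is an integer count rather than a mod-$2$ reduction, but since it only ever appears as an exponent of $(-1)$, your identity $(-1)^{\bz\cdot\bu}=\prod_i(-1)^{z_iu_i}$ is valid as stated. Finally, you make explicit the indicator-vector bijection showing the two displayed formulations are the same statement; the paper's proof leaves that translation implicit, so spelling it out is a small completeness gain on your side.
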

where $T\subseteq [n]$.

\begin{proof}
    If $\bu=\mathbf{0}$, then it is easy to see that   $\sum_{\bz\in\bbZ_2^n}(-1)^{\bz\cdot \bu}=2^n$.
Let $\textbf{e}_i\in \bbZ_2^n$ with $\supp(\textbf{e}_i)=\{i\}$ for $1\leq i\leq n$. If $\bu\neq \mathbf{0}$, then there exists an $i\in \supp(\textbf{u})$ such that $(-1)^{\textbf{e}_i\cdot \bu}=-1$. We have
\begin{equation}
\begin{aligned}
&-\sum_{\bz\in \bbZ_2^n}(-1)^{\bz\cdot \bu}=(-1)^{\textbf{e}_i\cdot \bu}\sum_{\bz\in \bbZ_2^n}(-1)^{\bz\cdot \bu}\\&=\sum_{\bz\in \bbZ_2^n}(-1)^{(\textbf{e}_i+\bz)\cdot \bu}=\sum_{\bz\in \bbZ_2^n}(-1)^{\bz\cdot \bu}.
\end{aligned}
\end{equation}
Therefore, $\sum_{\bz\in \bbZ_2^n}(-1)^{\bz\cdot \bu}=0$ if $\bu\neq \mathbf{0}$.
\end{proof}

Assume $\ket{\psi}\in \cH_{[n]}$. The concurrence acrross the bipartition $S|S^c$ is defined as  $C_{S|S^c}(\ket{\psi})=\sqrt{2[1-\tr(\rho_S^2)]}$ \cite{wootters1998entanglement}, where $\rho=\ketbra{\psi}{\psi}$.
Let $T\neq \emptyset$, then
\begin{equation}
\begin{aligned}
   &\sum_{ S\subseteq [n]}(-1)^{|S\cap T|}C_{S|S^c}^2(\ket{\psi})=   \sum_{ S\subseteq [n]}(-1)^{|S\cap T|}2[1-\tr(\rho_S^2)]\\&= -2\sum_{ S\subseteq [n]}(-1)^{|S\cap T|}\tr(\rho_S^2)\leq 0.
\end{aligned}
\end{equation}
where the second equation is obtained from Lemma~\ref{lemma: sumzero}, and the last inequality is from the shadow inequalities.

Therefore, the shadow inequalities give $2^{n}-1$ monogamy inequalities \cite{eltschka2018exponentially,huber2021positive},
\begin{equation}
   \sum_{ S\subseteq [n]}(-1)^{|S\cap T|}C_{S|S^c}^2(\ket{\psi})\leq 0,   \quad  \emptyset\neq T\subseteq [n].
\end{equation}
Note that $C_{S|S^c}(\ket{\psi})=C_{S^c|S}(\ket{\psi})$. Moreover, when $|T|$ is odd, $(-1)^{|S\cap T|}+(-1)^{|S^c\cap T|}=0$. Then $\sum_{S\subseteq [n]}(-1)^{|S\cap T|}C_{S|S^c}^2(\ket{\psi})= 0$,
when $|T|$ is odd. Thus, there are  $2^{n-1}-1$ real monogamy inequalities,
\begin{equation}
   \sum_{ S\subseteq [n]}(-1)^{|S\cap T|}C_{S|S^c}^2(\ket{\psi})\leq 0,   \quad    |T| \ \text{is even}, \ \emptyset\neq T\subseteq [n].
\end{equation}

\subsection{The proof of Proposition~\ref{proposition: pz}}\label{appendix: proposition 1}

\begin{proof}
Firstly, we need to show a lemma.
\begin{lemma}[\cite{rains2000polynomial}]\label{lemma:trace}
Assume  $\rho$ and $\sigma$ are two states on $\cH_{[n]}$. Let  $s_k$ be a swap operator that swaps the $k$-th subsystem of $\rho$ and the $k$-th subsystem of $\sigma$, and $(a_1,a_2,\ldots,a_n)\in \bbZ_2^n$, then
\begin{equation}
    \tr\left[(s_1^{a_1}\otimes s_2^{a_2}\otimes \cdots\otimes s_n^{a_n})(\rho\otimes \sigma)\right]=\tr(\rho_S\sigma_S),
\end{equation}
where $S=\supp(a_1, a_2, \ldots, a_n)$.
\end{lemma}

Now, we begin to prove Propositions~\ref{proposition: pz}.
Assume that the spectral decompositions of $\rho$ and $\sigma$ are $\rho=\sum_{i}p_i\ketbra{\psi_i}{\psi_i}$ and $\sigma=\sum_{j}q_j\ketbra{\phi_j}{\phi_j}$ respectively, where $0\leq p_i, q_j\leq 1$, $\sum_{i}p_i=1$, and $\sum_{j}q_j=1$.
Then $\rho\otimes \sigma=\sum_{i,j}p_iq_j\ket{\psi_i}\ket{\phi_j}\bra{\psi_i}\bra{\phi_j}$. In the $n$-qubit parallelized SWAP test, we only need to consider the ensemble $\{p_iq_j, \ket{\psi_i}\ket{\phi_j}\}$. Through the $n$-qubit parallelized SWAP test, we have
\begin{equation*}
\begin{aligned}
&\ket{0}\ket{0}\cdots \ket{0} \ket{\psi_i}\ket{\phi_j}\xrightarrow{H} \frac{1}{2^\frac{n}{2}}\sum_{(a_1,a_2,\ldots ,a_n)\in \bbZ_2^n}\ket{a_1}\ket{a_2}\cdots\\&\ket{a_n} \ket{\psi_i}\ket{\phi_j}
\xrightarrow{\text{control \ SWAP}} \frac{1}{2^\frac{n}{2}}\sum_{(a_1,a_2,\ldots ,a_n)\in \bbZ_2^n}\ket{a_1}\ket{a_2}\\&\cdots\ket{a_n}\left[(s_1^{a_1}\otimes s_2^{a_2}\otimes\cdots\otimes s_n^{a_n})\ket{\psi_i}\ket{\phi_j}\right]
\xrightarrow{H} \frac{1}{2^n}\times\\&\sum_{(b_1,b_2,\ldots ,b_n)\in \bbZ_2^n}\sum_{(a_1,a_2,\ldots ,a_n)\in \bbZ_2^n}(-1)^{\sum_{i=1}^na_ib_i}\ket{b_1}\ket{b_2}\cdots\\&\ket{b_n}\left[(s_1^{a_1}\otimes s_2^{a_2}\otimes\cdots\otimes s_n^{a_n})\ket{\psi_i}\ket{\phi_j}\right]=\ket{\Psi_{i,j}},
\end{aligned}
\end{equation*}
where $s_k$ is a swap operator that swaps the $k$-th subsystem of $\ket{\psi_i}$ and the $k$-th subsystem of $\ket{\phi_j}$  for $1\leq k\leq n$.

Let $\bz=(b_1,b_2,\cdots, b_n)\in \bbZ_2^n$, then
the probability
\begin{equation*}
\begin{aligned}
&p(\bz)=\sum_{i,j}p_iq_j\bra{b_1}\bra{b_2}\cdots\bra{b_n}\left(\tr_{B_1B_2\cdots B_n C_1C_2\cdots C_n}\ket{\Psi_{i,j}}\right.\\&\left.\bra{\Psi_{i,j}}\right)\ket{b_1}\ket{b_2}\cdots\ket{b_n}
=\sum_{i,j}p_iq_j\frac{1}{2^n}\left[\sum_{(a_1,a_2,\ldots ,a_n)\in \bbZ_2^n}\right.\\&\left.(-1)^{\sum_{i=1}^na_ib_i}\bra{\psi_i}\bra{\phi_j}(s_1^{a_1}\otimes s_2^{a_2}\otimes\cdots\otimes s_n^{a_n})\right]\times
\\& \frac{1}{2^n}\left[\sum_{(c_1,c_2,\ldots ,c_n)\in \bbZ_2^n}(-1)^{\sum_{i=1}^nc_ib_i}(s_1^{c_1}\otimes s_2^{c_2}\otimes\cdots\otimes s_n^{c_n})\right.\\&\left.\ket{\psi_i}\ket{\phi_j}\right]=\frac{1}{4^n}\sum_{(a_1,a_2,\ldots ,a_n)\in \bbZ_2^n}\sum_{(c_1,c_2,\ldots ,c_n)\in \bbZ_2^n}\\&(-1)^{\sum_{i=1}^n(a_i+c_i)b_i}\tr\left[(s_1^{a_1+c_1}\otimes s_2^{a_2+c_2}\otimes\cdots\otimes s_n^{a_n+c_n})\right.\\&\left.\sum_{i,j}p_iq_j\ket{\psi_i}\ket{\phi_j}\bra{\psi_i}\bra{\phi_j}\right]=\frac{1}{2^n}\sum_{(d_1,d_2,\ldots ,d_n)\in \bbZ_2^n}\\&(-1)^{\sum_{i=1}^nd_ib_i}\tr\left[(s_1^{d_1}\otimes s_2^{d_2}\otimes\cdots\otimes s_n^{d_n})(\rho\otimes \sigma)\right]\\&\xlongequal[]{\supp(d_1,d_2,\cdots,d_n)=S}\frac{1}{2^n}\sum_{S\subseteq [n]}(-1)^{|S\cap T|}\tr(\rho_S\sigma_S),
\end{aligned}
\end{equation*}
 where the last equation is obtained from Lemma~\ref{lemma:trace}.
\end{proof}

\subsection{The proof of Proposition~\ref{proposition: overlap}}\label{appendix: proposition 2}
\begin{proof}
Firstly, we need to prove a fact: assume $S, T\subseteq [n]$, then
\begin{equation}\label{eq: ST}
 \sum_{R\subseteq [n]}(-1)^{|R\cap T|+|S\cap R|}=  \left\{
\begin{array}{ll}
 0,   &\text{if} \ S\neq T; \\
 2^n,   &\text{if} \ S= T. \\
\end{array}
 \right.
\end{equation}
The proof of this fact is as follows.  If $S= T$, then  $\sum_{R\subseteq [n]}1=2^n$. We only need to consider the case  $S\neq T$.  There exist two unique vectors $\bu, \bv\in \bbZ_2^n$, such that $\supp(\bu)=S$ and $\supp(\bv)=T$. Then
\begin{equation}
\begin{aligned}
 &\sum_{R\subseteq [n]}(-1)^{|R\cap T|+|S\cap R|}=\sum_{\bz\in \bbZ_2^n}(-1)^{\bz\cdot \bv+ \bu\cdot \bz}\\&= \sum_{\bz\in \bbZ_2^n}(-1)^{\bz\cdot (\bu+\bv)}=\sum_{\bz\in \bbZ_2^n}(-1)^{\bz\cdot \textbf{w}},
 \end{aligned}
\end{equation}
where $\textbf{w}=\bu+\bv$. Since $S\neq T$, we have $\textbf{w}\neq \mathbf{0}$. By Lemma~\ref{lemma: sumzero}, we have $\sum_{\bz\in \bbZ_2^n}(-1)^{\bz\cdot \textbf{w}}=0$. Thus  $\sum_{R\subseteq [n]}(-1)^{|R\cap T|+|S\cap R|}=0$ for $S\neq T$.

Now, we can prove this proposition, that is
\begin{equation}
\begin{aligned}
    &\sum_{\bz\in \bbZ_2^n}(-1)^{\bz\cdot \textbf{t}}p(\bz)=\sum_{R\subseteq [n]}(-1)^{|R\cap T|}\left[\frac{1}{2^n}\sum_{S\subseteq [n]}(-1)^{|S\cap R|}\right.\\&\left.\tr(\rho_S\sigma_S)\right]
    =\frac{1}{2^n}\sum_{S\subseteq [n]}\sum_{R\subseteq [n]}(-1)^{|R\cap T|+|S\cap R|}\tr(\rho_S\sigma_S)\\
    &=\frac{1}{2^n}\sum_{S=T}\sum_{R\subseteq [n]}(-1)^{|R\cap T|+|S\cap R|}\tr(\rho_S\sigma_S)=\tr(\rho_T\sigma_T).
\end{aligned}
\end{equation}
\end{proof}

\subsection{Characterizing $\rho$ and $\sigma$  from the $n$-qubit parallelized SWAP test of $\rho$ and $\sigma$}\label{appendix:charaterizing}
In the $n$-qubit parallelized SWAP test of $\rho$ and $\sigma$,
 \begin{enumerate}[(i)]
     \item   $p(\mathbf{0})=1$ if and only if $\rho=\sigma=\bigotimes_{i=1}^n\ketbra{\psi_i}{\psi_i}$;
 \item  $p(\bz)=\frac{1}{2^n}$ for all $\bz\in \bbZ_2^n$ if and only if $\rho_T\sigma_T=\mathbf{0}$ for all $\emptyset\neq T\subseteq [n]$.
 \end{enumerate}

\begin{proof}
    (i) If $\rho=\sigma=\bigotimes_{i=1}^n\ketbra{\psi_i}{\psi_i}$, then $\tr(\rho_{T}\sigma_{T})=1$ for all $T\subseteq [n]$. Thus $p(\mathbf{0})=1$.

    If $p(\mathbf{0})=1$, then $p(\bz)=0$ for $\bz\neq \mathbf{0}$. According to Proposition~\ref{proposition: overlap}, we have $\tr(\rho_T\sigma_T)=1$ for all $T\subseteq [n]$.
    Assume $\rho=\sum_{i}p_i\ketbra{\psi_i}{\psi_i}$ and $\sigma=\sum_{j}q_j\ketbra{\phi_j}{\phi_j}$, where $0\leq p_i, q_j\leq 1$, $\sum_{i}p_i=1$, and $\sum_{j}q_j=1$. We have
\begin{equation}\label{eq: trrhosigma}
\tr(\rho\sigma)=\sum_{i,j}p_iq_j|\braket{\psi_i}{\phi_j}|^2=1,
\end{equation}
then we obtain $|\braket{\psi_i}{\phi_j}|=1$ for all $i,j$, i.e., $\ket{\psi_i}=w_{i,j}\ket{\phi_j}$, where $w_{i,j}\in \bbC$, and $|w_{i,j}|=1$ for all $i,j$. This implies that $\rho=\sigma$, and $\rho$ is a pure state. Moreover, we obtain that
 $\rho_T=\sigma_T$, and $\rho_T$ is a pure state for all $T\subseteq [n]$. Thus there exists  $\ket{\psi_i}\in \cH_i$ for $1\leq i\leq n$, such that $\rho=\sigma=\bigotimes_{i=1}^n\ketbra{\psi_i}{\psi_i}$.

(ii) If $\rho_T\sigma_T=\mathbf{0}$ for all  $\emptyset\neq T\subseteq [n]$, then $\tr(\rho_T\sigma_T)=0$ for all $\emptyset\neq T\subseteq [n]$. Note that $\tr(\rho_T\sigma_T)=1$ if $T=\emptyset$. Thus $p(\bz)=\frac{1}{2^n}$ for all $\bz\in \bbZ_2^n$.

 If $p(\bz)=\frac{1}{2^n}$ for all $\bz\in \bbZ_2^n$, then
 $\tr(\rho_T\sigma_T)=\frac{1}{2^n}\sum_{\bz\in \bbZ_2^n}(-1)^{\bz\cdot \bt}$, where $\bt\in\bbZ_2^n$, and $\supp(\bt)=T$. By Lemma~\ref{lemma: sumzero}, we obtain that $\tr(\rho_T\sigma_T)=0$ for all $\emptyset\neq T\subseteq [n]$. By Eq.~\eqref{eq: trrhosigma}, we know that if $\tr(\rho\sigma)=0$, then $\rho\sigma=\mathbf{0}$. Thus, $\rho_T\sigma_T=\mathbf{0}$ for all $\emptyset\neq T\subseteq [n]$.
\end{proof}

\subsection{The complexity of the algorithm for calculating the distances of QECCs}\label{appendix:complexity}

To analyze the complexity of code distance determination algorithm, we first analyze the sample complexity.
Consider general random variable $R= \sum_{\mathbf{z}\in \mathbb{Z}_{2}^{n}}f(\mathbf{z})p(\mathbf{z})$ and its estimator after sampling $N$ times: $\hat{R}= \sum_{z\in \mathbb{Z}_{2}^{n}} f(\mathbf{z}) \frac{N_{\mathbf{z}}}{N}$, where $N_{\mathbf z}$ is the number of occurrences of $\mathbf z$ in the sample. Since the $N_{\mathbf{z}}$ forms a multinomial distribution with $2^{n}$ terms and probability $p_{\mathbf{z}},\mathbf{z}\in \mathbb{Z}_{2}^{n}$, $\mathbb{E}N_{\mathbf{z}}=Np(\mathbf{z})$ and thus $\mathbb E \hat{R}=R$, the estimator is unbiased. We calculate the variance of the estimator:
\begin{equation}
\begin{aligned}
\mathrm{Var} \hat{R}=& \sum_{\mathbf{z},\mathbf{w}\in \mathbb{Z}_{2}^{n}} \frac{f(\mathbf{z})f(\mathbf{w})}{N^{2}} \mathrm{Cov}(N_{\mathbf{z}},N_{\mathbf{w}})
\\=&\sum_{\mathbf{z},\mathbf{w}\in \mathbb{Z}_{2}^{n}} \frac{f(\mathbf{z})f(\mathbf{w})}{N} (\delta_{\mathbf{z},\mathbf{w}} p(\mathbf{z}) - p(\mathbf{z})p(\mathbf{w}))
\\=& \frac{1}{N} \sum_{\mathbf{z}\in \mathbb{Z}_{2}^{n}} f(\mathbf{z})^{2}p(\mathbf{z}) - \frac{1}{N} R^{2}\label{appendix:variance-estimator},
\end{aligned}
\end{equation}
where we used the property the covariance of multinomial distribution $\mathrm{Cov}(N_{\mathbf{z}},N_{\mathbf{z}'})=N(\delta_{\mathbf{z},\mathbf{z}'}p(\mathbf{z}) -p(\mathbf{z})p(\mathbf{z}') )$. Denote the combinatorial number $f_{k}(w):=\sum_{|\mathbf{t}|=k,\mathbf{t}\in \mathbb{Z}_{2}^{n}}(-1)^{\mathbf{z}\cdot \mathbf{t}}=\sum_{m=0}^{\mathrm{min}(k,w)}(-1)^{m}{n-w \choose k-m}{w\choose m}$ where $\mathrm{wt}(\mathbf{z})=w$. We can rewrite $B_{j}'(\rho,\rho)$ and $A_{j}'(\rho,\rho)$ in terms of $p(\mathbf{z})$:
\begin{equation}
\begin{aligned}
A_{j}'(\rho,\rho)  =&\sum_{|T|=j}\mathrm{Tr}\left[\rho_{T}^{2}\right]\\=&\sum_{\mathrm{wt}(\mathbf{t})=j}\sum_{\mathbf{z}\in \mathbb{Z}_{2}^{n}}(-1)^{\mathbf{z}\cdot \mathbf{t}}p(\mathbf{z})\\=&\sum_{\mathbf{z}\in \mathbb{Z}_{2}^{n}}f_{j}(\mathrm{wt}(\mathbf{z}))p(\mathbf{z}), \\
 B_{j}'(\rho,\rho)   =& A_{n-j}'(\rho,\rho)\\=&\sum_{z\in \mathbb{Z}_{2}^{n}}f_{n-j}(\mathrm{wt}(\mathbf{z}))p(\mathbf{z}),\\
K B_{j}'(\rho,\rho) - A_{j}'(\rho,\rho)   =& \sum_{\mathbf{z}\in  \mathbb{Z}_{2}^{n}} \left[Kf_{n-j}(\mathrm{wt}(\mathbf{z}))\right.\\ &\left.- f_{j}(\mathrm{wt}(\mathbf{z}))\right] p(\mathbf{z}).
\end{aligned}
\end{equation}
Using Eq.~\eqref{appendix:variance-estimator}, the corresponding unbiased estimator
\begin{equation}
\begin{aligned}
&K\hat{B}_{j}'(\rho,\rho)-\hat{A}_{j}'(\rho,\rho) \\&= \sum_{\mathbf{z}\in \mathbb{Z}_{2}^{n}} (Kf_{n-j}(\mathrm{wt}(\mathbf{z})) - f_{j}(\mathrm{wt}(\mathbf{z}))) \frac{N_{\mathbf{z}}}{N}
\end{aligned}
\end{equation}
has variance
\begin{equation*}
\begin{aligned}
&\mathrm{Var}[K\hat{B}_{j}'(\rho,\rho)-\hat{A}_{j}'(\rho,\rho)] \\=&\frac{1}{N}\left( \sum_{z\in \mathbb{Z}_{2}^{n}} [Kf_{n-j}(\mathrm{wt}(\mathbf{z}))- f_{j}(\mathrm{wt}(\mathbf{z}))]^{2}p(\mathbf{z})\right.\\&\left.- \underbrace{ (KB_{j}'(\rho,\rho)-A_{j}'(\rho,\rho))^{2} }_{ 0 }\right)\\
\end{aligned}
\end{equation*}
\begin{equation}
\begin{aligned}
=&\frac{1}{N}\left( \sum_{z\in \mathbb{Z}_{2}^{n}} [Kf_{n-j}(\mathrm{wt}(\mathbf{z}))- f_{j}(\mathrm{wt}(\mathbf{z}))]^{2}p(\mathbf{z}) \right).
\end{aligned}
\end{equation}

If the code has distance at least $j$, $K B_{j}'(\rho,\rho) - A_{j}'(\rho,\rho) $ has real value 0, we can restrict the statistical estimation of $K B_{j}'(\rho,\rho) - A_{j}'(\rho,\rho) $ to range $[0,\epsilon]$ by restricting $\sqrt{ \mathrm{Var}[K\hat{B}_{j}'-\hat{A}_{j}'] }\sim\epsilon$, which means we should sample $N\sim \sum_{z\in \mathbb{Z}_{2}^{n}} [Kf_{n-j}(\mathrm{wt}(\mathbf{z})) -f_{j}(\mathrm{wt}(\mathbf{z}))]^{2} p(\mathbf{z})/ \epsilon ^{2}$ times. Since $|f_j(w)|\leq {n\choose j}$ with equal sign gained when $w=0$, the order of sampling number is upper bounded by
\begin{equation}
N=O\left(\frac{(K+1)^2 {n\choose j}^2}{\epsilon^2}\right).
\end{equation}
One should note that since these weight enumerators can take continuous values, it is generally not possible to determine if a code satisfies $KB_j'(\rho,\rho)=A_j'(\rho,\rho)$ perfectly using a finite number of samples.

Another thing to note is that although there are $2^n$ possible outcomes for $\mathbf z\in \mathbb Z_2^n$, the classical post-processing complexity is not necessarily exponential if the number of samples is sub-exponential. For example, to estimate $R=\sum_{z\in \mathbb Z_2^n} f(\mathbf {z})p(\mathbf{ z})$ we only need to go over all samples $\mathbf z$ and sum up the corresponding coefficient $f(\mathbf z)$, then divided by $N$. For the task of computing code distances, all coefficients can be pre-computed within polynomial time.

In summary, if we want to determine the code distance to be at least $j$ to accuracy $\epsilon$, the algorithm complexity is given by
\begin{equation}
O\left(\frac{(K+1)^2 \sum_{j'=0}^{j} {n\choose j'}^{2}}{\epsilon^2}\right) = O\left(\frac{K^{2}n^{2j}}{\epsilon ^{2}}\right).
\end{equation}
Note that if the code distance is small and code space dimension is low, the algorithm can be carried out efficiently. However, for general codes with $K=O(\exp(n))$ and $j=O(n)$, it requires exponentially many samples and post-processing time to perform.

\begin{table*}[t]	
\renewcommand\arraystretch{1.4}	
	\centering
	\renewcommand\tabcolsep{6pt}
     \caption{The Shor-Laflamme weight enumerators and the Rains unitary enumerators of the five-qubit code $\cQ_5$, the Steane code $\cQ_7$ and the Shor code $\cQ_9$.}\label{Table: code}
	\begin{tabular}{|c|c|c|c|c|c|c|c|c|c|c|}
		\hline
	$A_j$  &$A_0$ &$A_1$ &$A_2$ &$A_3$ &$A_4$ &$A_5$  &$A_6$ &$A_7$ &$A_8$  &$A_9$  \\
		\hline
  	$\cQ_5$  &$1$ &$0$ &$0$ &$0$ &$15$ &$0$  &$-$ &$-$ &$-$  &$-$  \\
    $\cQ_7$  &$1$ &$0$ &$0$ &$0$ &$21$ &$0$  &$42$ &$0$ &$-$  &$-$  \\
      $\cQ_9$  &$1$ &$0$ &$9$ &$0$ &$27$ &$0$  &$75$ &$0$ &$144$  &$0$  \\
		\hline
  $B_j$  &$B_0$ &$B_1$ &$B_2$ &$B_3$ &$B_4$ &$B_5$  &$B_6$ &$B_7$ &$B_8$  &$B_9$  \\
  	\hline
  	$\cQ_5$  &$\frac{1}{2}$ &$0$ &$0$ &$15$ &$\frac{15}{2}$ &$9$  &$-$ &$-$ &$-$  &$-$  \\
    $\cQ_7$  &$\frac{1}{2}$ &$0$ &$0$ &$\frac{21}{2}$ &$\frac{21}{2}$ &$63$  &$21$ &$\frac{45}{2}$ &$-$  &$-$  \\
      $\cQ_9$  &$\frac{1}{2}$ &$0$ &$\frac{9}{2}$ &$\frac{39}{2}$ &$\frac{27}{2}$ &$\frac{207}{2}$  &$\frac{75}{2}$ &$\frac{333}{2}$ &$72$  &$\frac{189}{2}$  \\
	\hline
  $A_j'$  &$A_0'$ &$A_1'$ &$A_2'$ &$A_3'$ &$A_4'$ &$A_5'$  &$A_6'$ &$A_7'$ &$A_8'$  &$A_9'$  \\
  \hline
  	$\cQ_5$  &$1$ &$\frac{5}{2}$ &$\frac{5}{2}$ &$\frac{5}{4}$ &$\frac{5}{4}$ &$\frac{1}{2}$  &$-$ &$-$ &$-$  &$-$  \\
    $\cQ_7$  &$1$ &$\frac{7}{2}$ &$\frac{21}{4}$ &$\frac{35}{8}$ &$\frac{7}{2}$ &$\frac{21}{8}$  &$\frac{7}{4}$ &$\frac{1}{2}$ &$-$  &$-$  \\
      $\cQ_9$  &$1$ &$\frac{9}{2}$ &$\frac{45}{4}$ &$\frac{147}{8}$ &$\frac{171}{8}$ &$18$  &$\frac{93}{8}$ &$\frac{45}{8}$ &$\frac{9}{4}$  &$\frac{1}{2}$  \\
	\hline
  $B_j'$  &$B_0'$ &$B_1'$ &$B_2'$ &$B_3'$ &$B_4'$ &$B_5'$  &$B_6'$ &$B_7'$ &$B_8'$  &$B_9'$  \\
  \hline
  	$\cQ_5$  &$\frac{1}{2}$ &$\frac{5}{4}$ &$\frac{5}{4}$ &$\frac{5}{2}$ &$\frac{5}{2}$ &$1$  &$-$ &$-$ &$-$  &$-$  \\
    $\cQ_7$  &$\frac{1}{2}$ &$\frac{7}{4}$ &$\frac{21}{8}$ &$\frac{7}{2}$ &$\frac{35}{8}$ &$\frac{21}{4}$  &$\frac{7}{2}$ &$1$ &$-$  &$-$  \\
      $\cQ_9$  &$\frac{1}{2}$ &$\frac{9}{4}$ &$\frac{45}{8}$ &$\frac{93}{8}$ &$18$ &$\frac{171}{8}$  &$\frac{147}{8}$ &$\frac{45}{4}$ &$\frac{9}{2}$  &$1$  \\
 \hline
	\end{tabular}
\end{table*}

\subsection{Quantum weight enumerators of the  five-qubit code, the Steane code  and the Shor code}\label{appendix: probability}
The normalized projectors of the five-qubit code $((5,2,3))_2$, the Steane code $((7,2,3))_2$
and the Shor code $((9,2,3))_2$ are
\begin{equation}
\begin{aligned}
\rho_{\cQ_5}=&\frac{1}{2^5}(I_{2^5}+XZZXI)(I_{2^5}+IXZZX)\times\\&(I_{2^5}+XIXZZ)(I_{2^5}+ZXIXZ),\\
\rho_{\cQ_7}=&\frac{1}{2^7}(I_{2^7}+IIIXXXX)(I_{2^7}+IXXIIXX)\times\\&(I_{2^7}+XIXIXIX)(I_{2^7}+IIIZZZZ)\times\\&(I_{2^7}+IZZIIZZ)(I_{2^7}+ZIZIZIZ),\\
\rho_{\cQ_9}=&\frac{1}{2^9}(I_{2^9}+ZZIIIIIII)(I_{2^9}+ZIZIIIIII)\times\\&(I_{2^9}+IIIZZIIII)(I_{2^9}+IIIZIZIII)\times\\&(I_{2^9}+IIIIIIZZI)(I_{2^9}+IIIIIIZIZ)\times\\&(I_{2^9}+XXXXXXIII)(I_{2^9}+XXXIIIXXX),
\end{aligned}
\end{equation}
respectively, where $I_n$ is the identity matrix of order $n$, and $X, Z, I$ are the Pauli matrices. Note that the five-qubit code and the Steane code are pure codes, and the Shor code is an impure code.
In the $5$-qubit parallelized SWAP test of $\rho_{\cQ_5}$ and $\rho_{\cQ_5}$, we obtain the probability distribution (shadow enumerators): $p(0)=\frac{9}{32}$, $p(1)=\frac{3}{64}$, $p(2)=\frac{3}{64}$, $p(5)=\frac{1}{64}$, $p(\bz)=0$ for others.
In the $7$-qubit parallelized SWAP test of $\rho_{\cQ_7}$ and $\rho_{\cQ_7}$, we obtain the probability distribution: $p(0)=\frac{45}{256}$, $p(1)=\frac{3}{128}$, $p(2)=\frac{3}{128}$, $p(0010110)=p(0011001)=p(0100101)=p(0101010)=p(1000011)=p(1001100)=p(1110000)=p(1101001)=p(1100110)=p(1011010)=p(1010101)=p(0111100)=p(0110011)=p(0001111)=\frac{3}{256}$,  $p(7)=\frac{1}{256}$, $p(\bz)=0$ for others. The probability distribution of the $9$-qubit parallelized SWAP test of $\rho_{\cQ_9}$ and $\rho_{\cQ_9}$ is as follows.

Let $A=\{1, 2, 3\}$, $B=\{4, 5, 6\}$, $C=\{7, 8, 9\}$.

\begin{equation}
  p(0)=\frac{189}{1024}.
\end{equation}
\begin{equation}
  p(1)=\frac{1}{64}.
\end{equation}
\begin{equation}
p(\bz\mid \wt(\bz)=2)=\left\{
\begin{array}{ll}
    \frac{25}{1024} & \begin{subarray}{c}
         \supp(\bz)\subseteq A, \, \text{or} \, \supp(\bz)\subseteq B, \\ \text{or} \, \supp(\bz)\subseteq C; 
    \end{subarray}
    \\
   \frac{1}{256} &   \text{other cases}. \\
\end{array}
\right.
\end{equation}
\begin{equation}
p(\bz\mid \wt(\bz)=3)=\left\{
\begin{array}{ll}
    \frac{1}{64} &  \begin{subarray}{c}\supp(\bz)= A, \, \text{or} \, \supp(\bz)= B, \\ \text{or} \, \supp(\bz)= C;   \end{subarray}\\
   \frac{1}{1024} &    \begin{subarray}{c}|\supp(\bz)\cap A|=1, \,  |\supp(\bz)\cap B|=1, \\ \text{and} \, |\supp(\bz)\cap C|=1. \end{subarray}\\
\end{array}
\right.
\end{equation}
\begin{equation}
p(\bz\mid \wt(\bz)=4)=\left\{
\begin{array}{ll}
    \frac{1}{256} &  \begin{subarray}{c}A\subseteq \supp(\bz),   \, \text{or} \, B\subseteq \supp(\bz), \\ \text{or} \, C\subseteq \supp(\bz); \end{subarray}\\
   \frac{5}{1024} &   \begin{subarray}{c}|\supp(\bz)\cap A|=2,  \, |\supp(\bz)\cap B|=2, \\ \text{or} \, |\supp(\bz)\cap A|=2, \,   |\supp(\bz)\cap C|=2, \\  \text{or} \  |\supp(\bz)\cap B|=2, \,  |\supp(\bz)\cap C|=2.\end{subarray}\\
\end{array}
\right.
\end{equation}
\begin{equation}
    p(\bz\mid \wt(\bz)=5)=\frac{1}{1024}, 
\begin{subarray}{c}  A \subseteq \supp(\bz), \, |\supp(\bz)\cap B|=1, \\ |\supp(\bz)\cap C|=1, \,
\text{or} \, B\subseteq \supp(\bz), \\ |\supp(\bz)\cap A|=1, \, |\supp(\bz)\cap C|=1, \\
\text{or} \, C\subseteq \supp(\bz), \, |\supp(\bz)\cap A|=1, \\ |\supp(\bz)\cap B|=1.
\end{subarray}
\end{equation}
\begin{equation}
p(\bz\mid \wt(\bz)=6)=\left\{
\begin{array}{ll}
    \frac{1}{256} &  \begin{subarray}{c}  
    \supp(\bz)=A\cup B,   \, \text{or} \, \supp(\bz)=A\cup C, \\ \text{or} \, \supp(\bz)=B\cup C; \end{subarray}\\
   \frac{1}{1024} &    \begin{subarray}|\supp(\bz)\cap A|=2, \,   |\supp(\bz)\cap B|=2,  \\  |\supp(\bz)\cap C|=2.\end{subarray}\\
\end{array}
\right.
\end{equation}
\begin{equation}
    p(\bz\mid \wt(\bz)=7)=\frac{1}{1024}, \begin{subarray}{c} 
A\cup B\subseteq \supp(\bz), \,
\text{or} \, A\cup C\subseteq \supp(\bz),\\ 
\text{or} \, B\cup C\subseteq \supp(\bz).
\end{subarray}
\end{equation}
\begin{equation}
  p(9)=\frac{1}{1024}.
\end{equation}
\begin{equation}
  p(\bz)=0.  \quad \text{for others}
\end{equation}

The Shor-Laflamme weight enumerators and the Rains unitary enumerators of these three codes are given in Table~\ref{Table: code}.

\subsection{The proof of Proposition~\ref{proposition: robustness}}\label{appendix: robustness}
\begin{proof}
Our proof is inspired by Refs.~\cite{coles2019strong,beckey2021computable}. For simplicity, we denote $\rho=\rho_{\cQ}$, and $\rho'=\rho_{\cQ'}$.
We have
\begin{equation}
\begin{aligned}
    |\tr({\rho'}_T^2)-\tr(\rho_T^2)|=&|\tr[(\rho'_T+\rho_T)(\rho'_T-\rho_T)]| &\\
    \leq& ||\rho'_T+\rho_T||_2||\rho'_T-\rho_T||_2     \\  &(\text{Cauchy-Schwarz inequality})\\
    \leq& (||\rho'_T||_2+||\rho_T||_2)||\rho'_T-\rho_T||_2\\&  (\text{triangle inequality})   \\
     \leq& 2||\rho'_T-\rho_T||_2 \\&  (\text{the purity is less than or equal to 1})   \\
     \leq& 2||\rho'_T-\rho_T||_1    
      \\&  (\text{the monotonicity of the Schatten}\\& \text{norms})\\
     \leq& 2||\rho'-\rho||_1  \\&  (\text{the monotonicity of the trace
norm}\\& \text{under CPTP maps})\\
 =&4D(\rho',\rho)\leq 4\epsilon.  &
\end{aligned}
\end{equation}
Then
\begin{equation}
\begin{aligned}
|A'_j(\rho',\rho')-A'_j(\rho,\rho)|&=\left|\sum_{|T|=j, T\subseteq [n]}\tr({\rho'}_T^2-\rho_T^2)\right|\\
&\leq \binom{n}{j}|\tr({\rho'}_T^2)-\tr(\rho_T^2)|\\
&\leq  4\binom{n}{j}\epsilon.
\end{aligned}
\end{equation}
Moreover, $|B'_j(\rho',\rho')-B'_j(\rho,\rho)|=|A'_{n-j}(\rho',\rho')-A'_{n-j}(\rho,\rho)|\leq 4\binom{n}{j}\epsilon$.

If $\rho$ is an $((n,K,\delta))_d$ QECC, then $KB'_{\delta-1}(\rho,\rho)-A'_{\delta-1}(\rho,\rho)=0$, and
\begin{equation}
\begin{aligned}
       &|KB'_{\delta-1}(\rho',\rho')-A'_{\delta-1}(\rho',\rho')|= |K[B'_{\delta-1}(\rho',\rho')\\&-B'_{\delta-1}(\rho,\rho)]-[A'_{\delta-1}(\rho',\rho')-A'_{\delta-1}(\rho,\rho)]|\\
       &\leq K|B'_{\delta-1}(\rho',\rho')-B'_{\delta-1}(\rho,\rho)|+|A'_{\delta-1}(\rho',\rho')\\
       &-A'_{\delta-1}(\rho,\rho)|\leq 4(K+1)\binom{n}{\delta-1}\epsilon.
\end{aligned}
\end{equation}
\end{proof}
\end{sloppypar}

\bibliographystyle{IEEEtran}
\bibliography{reference}

\vspace{30pt}
\begin{IEEEbiographynophoto}{Fei Shi}
received the B.S. degree from the School of Mathematics and Information Science at Guangzhou University, China, in 2017,  and received the Ph.D.  degree from the School of Cyber Science and Technology at the University of Science and Technology of China, Hefei, in 2022. From 2022 to 2025, he was a Postdoctoral Fellow in the School of Computing and Data Science at the University of Hong Kong. He is currently an Associate Professor in the School of Computer Science and Engineering at Sun Yat-sen University. His research interests include quantum information, quantum computing, combinatorics, and their interactions.
\end{IEEEbiographynophoto}

\vspace{11pt}
\begin{IEEEbiographynophoto}{Kaiyi Guo}
 received the B.S. degree in Physics and Computer Science from Peking University in 2024. He is currently a Ph.D. candidate in the University of Hong Kong. His research interests include quantum information theory and quantum error correction.
\end{IEEEbiographynophoto}

\vspace{11pt}
\begin{IEEEbiographynophoto}{Xiande Zhang}
(Member, IEEE) received the Ph.D. degree in mathematics from Zhejiang University, Hangzhou, China, in 2009. From 2009
to 2015, she held a post-doctoral position with Nanyang Technological
University and Monash University. She is currently a Professor with the School of Mathematical Sciences, University of Science
and Technology of China. Her research interests include combinatorial
design theory, coding theory, cryptography, and quantum information
theory and their interactions. She received the 2012 Kirkman Medal
from the Institute of Combinatorics and its Applications. She is on
the editorial boards of Journal of Combinatorial Designs and IEEE
Transactions on Information Theory.
\end{IEEEbiographynophoto}

\vspace{11pt}
\begin{IEEEbiographynophoto}{Qi Zhao}
received his B.S. degree in Mathematics and the Ph.D. degree in Physics from Tsinghua University in 2014 and 2018, respectively. Then he became a postdoctoral researcher at the University of Science and Technology, China from Jan. 2019 to Oct. 2019. In Dec. 2019, he joined the University of Maryland QuICS as a Hartree Postdoctoral Fellow in quantum information science. He is currently an Assistant Professor in the School of Computing and Data Science, at the University of Hong Kong (HKU). His research interests include quantum algorithms, quantum simulation, and entanglement detection.
\end{IEEEbiographynophoto}

\vfill

\end{document}